\newcommand{\shortECCC}[2]{\texttt{\href{http://eccc.hpi-web.de/report/\ifnumcomp{#1}{>}{93}{19}{20}#1/#2/}{eccc:TR#1-#2}}}
\newcommand{\parseECCC}[1]{
\StrSubstitute{#1}{TR}{}[\tmpstring]%
\IfSubStr{\tmpstring}{/}{ 
\StrBefore{\tmpstring}{/}[\ecccyear]%
\StrBehind{\tmpstring}{/}[\ecccreport]%
}{
\StrBefore{\tmpstring}{-}[\ecccyear]%
\StrBehind{\tmpstring}{-}[\ecccreport]%
}%
\shortECCC{\ecccyear}{\ecccreport}}
\newtheorem{conjecture}[theorem]{Conjecture}
\newtheorem{problem}[theorem]{Problem}
\newtheorem{questn}[theorem]{Question}
\newtheorem*{definition*}{Definition}
\newtheorem*{lemma*}{Lemma}
\newtheorem*{claim*}{Claim}
\newcommand{\LinSys}{\mathsf{LinSys}}
\newcommand{\LinDags}{\mathsf{LinDags}}
\newcommand{\LinTrees}{\mathsf{LinTrees}}
\newcommand{\BinDags}{\mathsf{BinDags}}
\newcommand{\BinRegDags}{\mathsf{BinRegDags}}
\newcommand{\reslin}[1]{\mathsf{Res(lin_{#1})}}
\newcommand{\reslinneq}[1]{\mathsf{Res(lin^{\neq}_{#1})}}
\newcommand{\ECC}{\mathsf{ECC}}
\newcommand{\mc}[1]{\mathcal{#1}}
\newcommand{\bmodels}{\models_{0\text{-}1}}
\newcommand{\spanv}[1]{\langle #1 \rangle}
\newcommand{\CE}{(q+1)\ln q}
\newcommand{\CI}{6\CE}
\date{}
\begin{document}

\newcommand\extrafootertext[1]{%
    \bgroup
    \renewcommand\thefootnote{\fnsymbol{footnote}}%
    \renewcommand\thempfootnote{\fnsymbol{mpfootnote}}%
    \footnotetext[0]{#1}%
    \egroup
}

\title{Lower Bounds for Subset Sum in Resolution with Modular Counting}

\author[1, 2]{Fedor Part\thanks{email: fedor.part@gmail.com}}

\affil[1]{Institute of Mathematics of the Czech Academy of Sciences}
\affil[2]{JetBrains Research}

\extrafootertext{\hspace{-18pt} The author was
supported by  the Institute of Mathematics, Czech Academy of Sciences (RVO~67985840)
and by GA\v{C}R grant 23-04825S.}

\maketitle

\begin{abstract}

In this paper we prove lower bounds for sizes of refutations of unsatisfiable vector Subset Sum
instances $\overline{a}_1 x_1 + \dots + \overline{a}_n x_n = \overline{b}$ in the proof
system $\reslin{\F_q}$, where $\F_q$ is a finite field of prime power order $q = q(n)$ with $char(\F_{q})\geq 5$. As a basis for the
hardness criterion for such instances we choose
the property of the matrix $A$ with columns $(\overline{a}_1, \ldots, \overline{a}_n)$ to
be (the transpose of) the generator matrix for a good error-correcting code 
$\mathcal C_{A} := \{x\cdot A\, |\, x \in \F_{q}^k\}\subset \F_{q}^n$ and prove the following lower bounds:
\begin{enumerate}
\item For a dag-like fragment of $\reslin{\F_q}$. We introduce the notion of $(s,r)$-robustness for Subset Sum instances, which in
particular implies that $A$ defines an error-correcting code with the minimal 
distance $s\geq r$. For $(s,r)$-robust instances we prove $2^{\Omega(r)}$ lower bound for sizes of refutations in a dag-like fragment of $\reslin{\F_q}$. We show that random
instances are $(n / 3, \Omega\left((n/(q + 1)\ln q))^{1/3}\right))$-robust and that specific examples achieving these bounds can be constructed using algebraic 
geometry codes.

\item For tree-like $\reslin{\F_q}$ refutations we show the size lower bound
$2^{\Omega({((q+1)\ln q)^{-1/3}}d^{1/5})}$ for any Subset Sum instance where $d$ is the minimal distance of
$\mathcal C_{A}$.
\end{enumerate}
\end{abstract}

\section{Introduction}

One of the central research programs in proof complexity, initiated by Cook and Reckhow \cite{CookReckhow79}, 
is to obtain the separation $\NP\neq \coNP$ via proving superpolynomial lower bounds for
sizes of proofs in all propositional proof systems. Among specific propositional proof systems that have been extensively studied are systems that operate with De Morgan formulas.
The most natural examples of such systems are \emph{Frege systems}, the standard ``textbook'' proof systems for propositional logic (for example, Hilbert-style or sequent calculus).
Obtaining superpolynomial lower bounds for general Frege systems turns out to be very hard, such bounds are probably out of reach in the nearest future. Nevertheless a significant progress
has been achieved: strong
lower bounds have been proven for AC$^0$-Frege systems, which are Frege systems restricted to operate only with bounded depth, unbounded fan-in De Morgan formulas \cite{Ajt88,PBI93,KPW95,BS02,Has17, Hastad23}.

Unfortunately all intricate lower bound techniques currently developed for proof systems operating with De Morgan formulas fail once De Morgan language is extended with
counting connectives. One of the long standing open problems at the frontier of proof complexity is to prove a superpolynomial lower bound for AC$^0[p]$-Frege systems, which
are defined similarly to AC$^0$-Frege systems with the difference that formulas can contain also counting modulo $p$ connective. This problem manifests itself already at the
level of resolution, the system operating with disjunctions of literals, that is with De Morgan formulas of depth $1$.

\emph{Resolution over linear equations} $\reslin{\mc R}$ is a propositional proof system extending resolution by allowing linear equations over a ring $\mathcal{R}$ in place of literals. 
Such extensions for various $\mathcal{R}$ equip resolution with the ability to perform basic counting arguments 
efficiently\footnote{Typically for $\mathcal{R} = \mathbb{Z}$ or for $\mathcal{R} = \mathbb{Z}_n$ for modular counting.}. Although $\reslin{\mc R}$ is a very simple extension of resolution,
which is the most well-understood propositional proof system with a plenty of exponential lower bounds proven for it, no superpolynomial lower bounds have been proven for unrestricted 
$\reslin{\mc R}$ to date. Thus $\reslin{\mc R}$ is a good starting point for devising novel techniques that work for proof systems combining De Morgan language with counting connectives since $\reslin{\mc R}$ is one of the simplest such systems and since 
$\reslin{\mc R}$ is very close to resolution, which is the most well-studied proof system. 

The strength of $\reslin{\mc R}$ needs to be understood primarily for 
$\coNP$-complete language of unsatisfiable CNFs, the central $\coNP$-complete language 
used to compare propositional proofs systems. However, in case $\mathcal{R} = \mathbb{Z}$
$\reslin{\mathbb{Z}}$ is also naturally a proof system for $\coNP$-complete language
$\mathsf{SubSum}$ of unsatisfiable subset sum instances $a_1x_1 + \dots + a_nx_n = b$, 
$x_i \in \{0, 1\}$. In \cite{PT18} Part and Tzameret
proved the first superpolynomial lower bound for dag-like $\reslin{\mathbb{Z}}$
considered as a proof system for $\mathsf{SubSum}$. This lower bound is a consequence of theorems proved in \cite{PT18}, which characterize $\reslin{\mathbb{Z}}$
as a proof system corresponding to \emph{dynamic programming}. It is not hard to see
that $a_1x_1 + \dots + a_nx_n = b$ admits dynamic programming refutations in dag-like $\reslin{\mathbb{Z}}$ which are of size 
$poly(|\{a_1{\nu}_1 + \dots + a_n{\nu}_n\, |\, \nu_i \in \{0,1\}\}|)$. One of the main theorems in \cite{PT18} is the lower bound 
$\Omega(|\{a_1{\nu}_1 + \dots + a_n{\nu}_n\, |\, \nu_i \in \{0,1\}\}|^{\epsilon})$,
for some $\epsilon < 1$, showing that $\reslin{\mathbb{Z}}$ cannot do better than dynamic programming. In particular,
$\reslin{\mathbb{Z}}$ refutations of the Binary Value Principle 
$x_1+2x_2+\dots+2^{n-1}x_n = -1$ are of size $2^{\Omega(n)}$.

In case $\mc R$ is a finite field $\F_q$ of size $q = poly(n)$ the subset sum problem
consisting of instances $a_1x_1 + \dots + a_nx_n = b$, $x_i \in \{0,1\}$, 
$a_i \in \F_{q(n)}$ is easy since it can be solved in polynomial time by dynamic
programming. If $char(\F_q) \geq 5$, a $\coNP$-complete analogue of $\mathsf{SubSum}$ 
in this setting is
the vector version of subset sum: unsatisfiable instances of the form
 $\overline a_1x_1 + \dots + \overline a_nx_n = \overline b$, 
$x_i \in \{0, 1\}$, $\overline{a}_1, \ldots, \overline{a}_n, \overline{b} \in \F_{q(n)}^k$
\footnote{We have to require at least that $q \geq 3$ since otherwise
the language is not $\coNP$-complete. The proof of $\coNP$-completeness in case
$char(\F_q) \geq 5$ can be found in \cite{PT18}.}.
This language, which we denote $\LinSys_{\F_q}$, comprises pairs $(A, b)$
defining linear systems $A\cdot x = b$ over $\F_{q(n)}$ without solutions in the boolean cube.

Aiming at 
constructing a bridge between lower bounds in \cite{PT18} and CNF lower bounds we initiate the development of methods for proving dag-like $\reslin{\F_q}$ lower bounds
for $\LinSys_{\F_q}$. Instances $A\cdot x = b$ are in some ways simpler than CNFs, this makes analysis of their $\reslin{\F_q}$ refutations more approachable. As we demonstrate, this analysis can benefit from techniques in linear algebra and additive combinatorics.

As in the case $\mc R = \mathbb{Z}$ proof systems $\reslin{\F_q}$ can refute 
instances $A\cdot x = b$ via dynamic programming with refutation of size polynomial
in the size $|A(\{0,1\}^n)|$ of the $A$-image of the boolean cube. However, in contrast
to $\reslin{\mathbb{Z}}$, this is not optimal: there exist
$A$, $b$ such that $|A(\{0,1\}^n)|$ is exponential, but $A\cdot x = b$ admits polynomial
size $\reslin{\F_q}$ refutations. For example, let $q\geq 3$, pick any $c\in \F_q\setminus\{0,1\}$,
and consider the $n\times n$ identity matrix $A = I_n$ with $b = (c,c,\ldots,c)$.
Then $A(\{0,1\}^n) = \{0,1\}^n$ has size $2^n$, yet for each $i$ the equation $x_i = c$
directly contradicts the boolean axiom $x_i = 0 \vee x_i = 1$ (since $c\notin\{0,1\}$),
yielding an $O(n)$-size $\reslin{\F_q}$ refutation. Proving lower bounds in this setting is significantly
more complicated since instances contain several equations.

Our ultimate goal is to identify natural hardness criterions for pairs $(A, b)$ simultaneously with discovering novel lower bound techniques allowing 
to prove lower bounds for these maps in dag-like $\reslin{\F}$. Especially valuable would be a hardness criterion for $(A, b)$, where $A$ is such that
the matrix $A$ contains small number of nonzero elements in each row, in that case $A\cdot x = b$ would have a short CNF encoding and
thus a lower bound for $A\cdot x = b$ would imply a lower bound for a CNF. 

Another interesting and related task is to identify natural hardness criterions for $(A, b)$ in polynomial calculus over $\F_q$, a pretty well-understood proof system. 
Firstly, polynomial calculus might serve as a testing ground for hardness for $\reslin{\F_q}$: polynomial calculus lower bounds imply tree-like $\reslin{\F_q}$ lower bounds \cite{PT18}.
But it is also of independent interest. In \cite{AlekhnovichRazborov01} Alekhnovich and Razborov formulated a polynomial calculus hardness criterion for 0-1 unsatisfiable systems 
$g_1 = 0, \ldots, g_m = 0$ of polynomial equations. Denote $M$ the 0-1 $m\times n$ matrix such that $g_i$ depends on $x_j$ iff $M_{ij} = 1$. Then the system $g_1 = 0, \ldots, g_m = 0$
is hard for polynomial calculus if each of $g_i$ is $l$-immune for $l > 1$ larger than
some small constant and $M$ is a good enough expander. However, the requirement for $g_i$ to be $l$-immune for $l > 1$ means that the characterisation of hard systems in \cite{AlekhnovichRazborov01} completely avoids hard 0-1 unsatisfiable systems of linear equations over $\F_q$.
Note that it is not hard to prove polynomial calculus lower bounds for some specific $(A, b)$, in fact
one can even define a class of $(A, b)$ constructed from CNFs and an implicit subclass, for which lower bounds follow from Alekhnovich-Razborov lower bounds, but this does not 
give rise to any \emph{natural} hardness criterion.

\subsection{Related work}

\subsubsection{Resolution over linear equations}

Proof system $\reslin{\mc R}$ was first introduced in the regime $\mathcal{R} = \mathbb{Z}$ by Raz and Tzameret in \cite{RT07}, where they showed that $\reslin{\mathbb{Z}}$
has short proofs for many instances that are frequently used as hard instances in proof complexity. Subsequently Itsykson and Sokolov introduced in \cite{IS14} proof system
$\reslin{\F_2}$,
proved several upper and lower bounds for \emph{tree-like} $\reslin{\F_2}$ and proved that $\reslin{\mathbb{Z}}$ p-simulates $\reslin{\F_2}$. 
In \cite{Khaniki21} Khaniki proved almost
quadratic lower bounds for resolution over polynomial equations over finite fields and thus, in particular, for $\reslin{\F_q}$. In \cite{GPT22} the computational model of \emph{linear branching programs} (LBPs) was
introduced, which is related to $\reslin{\F_2}$ in the same way as ordinary
boolean branching programs are related to resolution. More specifically, \cite{GPT22}
introduced the notions of \emph{strongly} and \emph{weakly read-once} LBPs
and showed that weakly read once LBPs correspond to what can naturally be called
the \emph{(weakly) regular} $\reslin{\F_2}$ whereas for strongly read-once LBPs
\cite{GPT22} proved exponential lower bounds. Subsequently \cite{EGI24} proved a
superpolynomial lower bound for refutations of the Bit Pigeonhole Principle in strongly regular $\reslin{\F_2}$. More recently, Alekseev and Itsykson \cite{AI25} proved exponential
lower bounds for bounded-depth dag-like $\reslin{\F_2}$ refutations of the Pigeonhole Principle; further progress on dag-like $\reslin{\F_2}$ lower bounds has been obtained in the subsequent follow-up works.  

\subsubsection{Binary value principle}

A variant of the Subset Sum Principle subsequently
called in \cite{AGHT19} the Binary Value Principle (BVP) is represented by the single equation $x_1+2x_2+\dots+2^{n-1}x_n = -1$, which is unsatisfiable over the Boolean 
assignments or, in other words, 0-1 unsatisfiable\footnote{Strictly speaking BVP is the negation of $x_1+2x_2+\dots+2^{n-1}x_n = -1$, it says that the natural number represented by the bit string $x_1\ldots x_n$ is never $-1$. 
$\reslin{\mc R}$ is like resolution a refutation system: it refutes $\neg\phi$ for a tautology $\phi$.}. 
As recent research in proof complexity showed, quite surprisingly, this simple principle turns out to be hard even for strong algebraic proof systems. 
In \cite{AGHT19} it was proved that BVP does not have short proofs even in the ideal proof system assuming Shub-Smale hypothesis. In \cite{ale21} Alekseev proved unconditional lower
bound for BVP in a pretty strong extension of polynomial calculus, where introduction of new variables and taking radicals are allowed.

\subsection{Our contributions}

Let $q$ be a prime power which can depend on $n$, the number of variables. We base hardness criterions for instances of the form $A\cdot x = b$ on the notions of an error correcting code and what we call $(s, r)$\emph{-robustness}, a combinatorial, algebraic 
property of linear systems $A\cdot x = b$, which we introduce. As a step towards general
$\reslin{\F_q}$ lower bounds for such instances we prove superpolynomial lower bounds
for a nontrivial dag-like fragment of $\reslin{\F_q}$ and also for two tree-like 
fragments, which capture features of $\reslin{\F_q}$ that are in a sense complementary to those captured by our dag-like fragment.

\begin{enumerate}
\item We consider a natural proof system $\LinDags_{\F_q}$ for $\LinSys_{\F_q}$ where
refutations are dags where nodes represent splittings on possible values of a linear form. 
This is a fragment of $\reslin{\F_q}$. Using dynamic programming argument one can show
that the fragment $\BinDags_{\F_q}$ of $\LinDags_{\F_q}$, which uses only binary splittings on variables, can simulate $\LinDags_{\F_q}$. We then consider a regular variant $\BinRegDags_{\F_q}$ of
$\BinDags_{\F_q}$ where each variable appears at most
once on every path from the root to a leaf. This proof system can no longer simulate
$\LinDags_{\F_q}$ since it can do dynamic argument at most once. Moreover, $\BinRegDags_{\F_q}$
is presumably weaker than regular $\reslin{\F_q}$ as defined, for instance, in 
\cite{EGI24}. However methods in \cite{EGI24} are not directly applicable in our setting for two reasons: we consider the case $q > 2$ whereas the results in \cite{EGI24} apply
only to $\F_2$, and we study linear systems whereas \cite{EGI24} study CNFs. 

 Still, $\BinRegDags_{\F_q}$
is nontrivial dag-like proof system which improves on dynamic programming and can efficiently refute systems like 
$\{x_i+2x_{n+i} = 0\}_{i \in [n]}\cup \{x_1 + \ldots + x_{2n} = 2\}$ (for this
system written as $A\cdot x = b$ the size of the image $|A(\{0,1\}^n)|$ is exponential,
which makes dynamic programming alone insufficient). Formal definitions of all proof systems are given in Section~\ref{sec:backgr}.

\item We prove $2^{\Omega(r)}$ lower bound for refutations in $\BinRegDags_{\F_q}$ of
instances $A\cdot x = b$ satisfying a property, which we call 
$(s, r)$\emph{-robustness}, $s\geq r$. In particular, for such instances the space
$\mc C_A := \{y\cdot A\, |\, y \in \F_q^k\}$ is an error correcting code (ECC) with
the distance at least $s$.  The dag-like lower bound is in Section~\ref{bindagLB}.

\item We construct $(s,r)$-robust instances from random linear codes and
algebraic geometry codes (including Hermitian codes), giving semi-explicit hard
families. We show that random instances, obtained by randomly choosing
$A$, are with high probability $(n / 3, \Omega\left((n/(q + 1)\ln q))^{1/3}\right))$-robust. As a consequence, we obtain that all explicit codes that meet the Gilbert-Varshamov bound for random codes, such as algebraic geometry codes, can be used to construct
explicit instances with these parameters (at the moment we can only choose $A$). We raise several open problems: the exact threshold $\Delta(k,q)$ for
0-1 satisfiability, existence of robust instances for small~$s$, and extensions
of our criterion to Nullstellensatz and Polynomial Calculus.
Sections~\ref{hardInstSec}--\ref{sec:robust} and the Conclusion.

\item We prove $2^{\Omega(((q + 1)\ln q)^{-1/3}d^{1/5})}$ lower bounds for
tree-like $\reslin{\F_q}$ and $\LinTrees_{\F_q}$ (the tree-like version of
$\LinDags_{\F_q}$) for every instance $A\cdot x = b$ where $A$ is a generator
matrix for an ECC with minimal distance $d$.  The proof uses a Prover-Delayer
game: we exhibit a Delayer strategy guaranteeing
$\Omega(((q+1)\ln q)^{-1/3}d^{1/5})$ branching points.
Sections~\ref{sec:tree-like} and~\ref{bindagLB}.

\item At the heart of both lower bounds is a characterisation of when
$\ECC^{n,k,d}_{\F_q}$ is non-empty.  We show: (a)~if $n < (\log_2 q)\,k$ then
hard instances exist; (b)~if $d \geq ((q+1)\ln q)\,k^3$ then every system with
$d_A \geq d$ is 0-1 satisfiable (no hard instances with such large distance
exist).  The proof of~(b) reduces to a Minkowski-sum lemma in additive
combinatorics: any $t \geq ((q+1)\ln q)\,k^2$ bases of $\F_q^k$ satisfy
$X_1+_M\cdots+_MX_t = \F_q^k$.  Section~\ref{hardInstSec}.

\end{enumerate}

\medskip\noindent\textbf{Organisation.}
Section~\ref{sec:backgr} defines $\reslin{\F_q}$, $\reslinneq{\F_q}$, and their
fragments ($\LinDags_{\F_q}$, $\BinDags_{\F_q}$, $\BinRegDags_{\F_q}$,
$\LinTrees_{\F_q}$), together with the Prover-Delayer games.  Notation is in
Section~\ref{sec:notation}.  Section~\ref{hardInstSec} develops ECC-based hard
instances: emptiness/non-emptiness criteria for $\ECC^{n,k,d}_{\F_q}$,
$(s,r)$-robustness, and its verification for random and Hermitian codes.
Sections~\ref{sec:tree-like} and~\ref{bindagLB} contain the lower-bound proofs.
Open problems are in the Conclusion and in Section~\ref{sec:robust}.

\tableofcontents

\section{Background and notation}\label{sec:backgr}  

\subsection{Proof system $\reslin{\F_q}$}

Proof lines of $\reslin{\F_q}$ are disjunctions of linear equations over a $\F_q$: 
$\left(\sum\nolimits_{i=0}^na_{1i}x_i+b_1=0\right)\vee\dots\vee\left(\sum\nolimits_{i=0}^na_{ki}x_i+b_k=0\right)$. 
The rules of $\reslin{\F_q}$ are as follows (cf.~\cite{RT07}): 

\begin{prooftree}
        \centering
        \def\labelSpacing{12pt}
        \AxiomC{$C \vee f=0$}
        \AxiomC{$D \vee g=0$}
        \RightLabel{($\alpha,\beta\in \F_q$)}
        \LeftLabel{(Resolution)}
        \BinaryInfC{$C \vee D \vee \left(\alpha f+ \beta g\right)=0$}
\end{prooftree}

\begin{prooftree}
        \AxiomC{$C \vee a=0$}
        \RightLabel{($0\neq a\in \F_q$)}
        \LeftLabel{(Simplification)}
        \UnaryInfC{$C$}
        
        \AxiomC{$C$}
        \LeftLabel{(Weakening)}
        \UnaryInfC{$C \vee f=0$}
        \noLine
        \BinaryInfC{}
\end{prooftree}
where $f,g$ are linear polynomials over $\F_q$ and $C, D$ are \emph{linear clauses}.
A \emph{clause} is a disjunction of literals (positive or negative boolean variables),
and a \emph{linear clause} is a disjunction of linear equations $f = a$ or inequalities
$f \neq a$ (in case of $\reslinneq{\F_q}$ which we define later) over $\F_q$.
We use the term ``clause'' broadly for disjunctions of literals, of linear equations,
or of linear inequalities, as appropriate from context.
The \emph{boolean axioms} are defined as follows: 
$$ x_i=0 \lor x_i=1 \text{, ~for $x_i$ a variable}$$
A $\reslin{\F_q}$ \emph{derivation} of a linear clause $D$ from a set of linear clauses $\phi$ is a
sequence of linear clauses $(D_1,\dots,D_s\equiv D)$ such that for every $1\leq i\leq s$ 
either $D_i\in \phi$ or is a boolean  axiom or $D_i$ is obtained from previous clauses by applying one of the rules above and $D_s\equiv D$ means $D$ coincides with $D_s$.
A $\reslin{\F_q}$ \emph{refutation} of an unsatisfiable
set of linear clauses $\phi$ is a $\reslin{\F_q}$ derivation of the empty clause from $\phi$. 

\subsection{Notation and conventions}\label{sec:notation}

A \emph{linear error-correcting code} (ECC) $\mc C \subseteq \F^n$ of dimension $k$ is a $k$-dimensional subspace.
Its \emph{generator matrix} $A$ is any $k\times n$ matrix whose rows form a basis of $\mc C$,
so that $\mc C = \{x\cdot A \mid x\in \F^k\}$\footnote{Our definition slightly differs from the standard one in the literature: our generator matrix corresponds to the standard generator matrix transposed.}. Given $A$ we denote the corresponding code
$\mc C_A$. Generator matrices occurring throughout the paper are matrices $A$ in equations $A\cdot x = b$.

For a vector $v \in \F^n$ we define its \emph{weight} $\omega(v) := |\{i \mid v_i \neq 0\}|$ as the number of its nonzero coordinates. For a subspace $V \subseteq \F^n$ we set $\omega(V) := \min_{v\in V,\, v\neq 0}\omega(v)$. A linear subspace $\mc C \subseteq \F^n$ is an ECC with parameters $(n, k, d)$ if $\dim(\mc C) = k$ and $\omega(\mc C) = d$. If $A$ is a 
$k \times n$ matrix over $\F$, we also use the notation $d_A := \omega(\mc C_A)$ for the minimal (Hamming) distance of the code $\mc C_A$.

For a set $\mathcal{F}$ of vectors in a vector space, we denote by $\langle \mathcal{F} \rangle$ the vector subspace generated by $\mathcal{F}$. We define the \emph{weight-truncated span} $[V]_{\omega \leq \tau} := \langle \{v \in V \mid \omega(v) \leq \tau\} \rangle$ as the vector subspace generated by all vectors in $V$ of weight at most $\tau$.

We denote $[n] := \{1, \ldots, n\}$.  For subsets $V, W \subseteq \F^n$ we write $V + W := \langle V\cup W\rangle$ for the linear span of their union, and $V +_M W := \{v + w \mid v \in \langle V\rangle,\, w \in \langle W\rangle\}$ for their \emph{Minkowski sum}.  Note that $V+_M W \subseteq V+W$, but $V+_M W$ is not necessarily a linear subspace and in general is a proper subset of $V+W$.  If $V \subseteq \F^n$ and $I \subseteq [n]$, we write $V_{I}$ or $V_{[I]}$ for the set of vectors in $\F^{|I|}$ obtained from vectors of $V$ by retaining only the coordinates indexed by $I$.

If $\mathcal{F}$ is a set of linear polynomials with $n$ variables (the set is unordered, but once an implicit ordering of the rows of the corresponding matrix is fixed), we write $\mathcal{F}_{[1]}$ for the \emph{tuple} of linear forms (the degree-$1$ parts) in $\mathcal{F}$, and $\mathcal{F}_{[0]}$ for the \emph{tuple} of their free coefficients (degree-$0$ terms).  The use of ``tuple'' (rather than ``set'') reflects the fact that these objects are ordered consistently with the rows of the coefficient matrix.  For readability we will sometimes interpret a vector space of linear polynomials $P$ as a vector space of linear equations, writing both $f - a \in P$ and $f = a \in P$ for $a\in\F$.

For a vector space $P$, we let $red_P(h)$ be any $h'\in P$ such that $h' \equiv \alpha h + f$ for some $f \in P$, $\alpha\neq 0 \in \F$, and $h'$ has minimal weight among all such vectors.\footnote{Here $\equiv$ means ``coincides as a polynomial''. Since we are often dealing with spaces of equations we use $\equiv$ for equality of their elements to avoid confusion with ``$=$'' used inside equations.}  Finally, for linear spaces of equations $\mc F$ and $\mc G$ we write $\mc F\models_{0,1} \mc G$ if every equation $g = a \in \mc G$ is implied by $\mc F$ over 0-1 assignments.

\subsection{Random codes and algebraic geometry codes}\label{sec:RScode}

\begin{theorem}[Gilbert-Varshamov bound, see e.g.\ \cite{AGCbook}]\label{thm:GV}
Let $q \geq 2$ be a prime power and let $A$ be a uniformly
random $k \times n$ matrix over $\F_q$.  Consider the linear code
$\mc C_A = \{x\cdot A\, |\, x\in \F_q^k\}$ with generator matrix $A$; it has
parameters $(n, k, d)$, where $n$ is the length, $k = \dim(\mc C_A)$ is the dimension,
and $d = \omega(\mc C_A)$ is the \emph{minimal (Hamming) distance}, i.e.\ the smallest
Hamming weight of a nonzero codeword.
Let $H_q(x) := x\log_q (q - 1) - x\log_q x - (1-x)\log_q (1-x)$ be $q$-ary entropy and
denote $R := k / n$, $\delta := d / n$.  If $0 \leq \delta_0 < 1 - 1 / q$ then
the probability that
$d / n \geq \delta_0$ is at least $1 - q^{-\eta n}$ where
$\eta := (1 - H_q(\delta_0) - R)$.
\end{theorem}

The proof of the theorem is straightforward and is folklore. For our hardness results on random instances, we will need the following corollary, which gives an explicit bound on minimal distance for random codes:

\begin{corollary}[Random codes]\label{cor:rand_code_dist}
Let $q > 8$ be a prime power and let $A$ be a uniformly random $k \times n$ matrix over
$\F_q$ where $k \geq \lfloor n / \log q \rfloor$. Then
with high probability the $(n, k, d)$-code $\{x\cdot A\, |\, x\in \F_q^k\}$ with generator matrix $A$ has distance $d \geq n - 3k$.
\end{corollary}
\begin{proof}
We apply Theorem~\ref{thm:GV} with $\delta_0 := 1 - 3R$ where $R := k/n$.
We must verify: (i) $\delta_0 \in [0, 1-1/q)$, and (ii) $\eta := 1 - H_q(\delta_0) - R > 0$.

\medskip
\noindent\textit{Step 1: check $\delta_0 \geq 0$.}
We have $\delta_0 = 1 - 3k/n \geq 0$ iff $k \leq n/3$.  Since $k \geq n/\log q$ and
$q > 8$ we have $k/n \leq 1/3$ as required.

\medskip
\noindent\textit{Step 2: check $\delta_0 < 1-1/q$.}
We need $3R > 1/q$, i.e.\ $k > n/(3q)$.  Since $k \geq n/\log q$ and $\log q < 3q$
for all $q\geq 2$, this holds.

\medskip
\noindent\textit{Step 3: lower bound $\eta$.}
Using $H_q(\delta_0) \leq (1-3R)\log_q(q-1) + H_2(\delta_0)/\log q$ and
$H_2(\delta_0) \leq 1$:
$$\eta \geq 1 - (1 - 3R)\log_q(q-1) - \frac{1}{\log q} - R.$$
Since $\log_q(q-1) < 1$ and $q > 8$ implies $1/\log q < 1/3 < R$ we get
$\eta > 1 - (1-3R) - R - R = 2R - R = R > 0$.
(A more careful calculation gives $\eta > 2R - 1/\log q \geq R > 0$.)

\medskip
By Theorem~\ref{thm:GV}, the probability that $d/n \geq \delta_0 = 1 - 3R$ fails
is at most $q^{-\eta n} \leq q^{-Rn}$, which goes to $0$ with $n$.
\end{proof}

We can construct explicit hard instances out of any code with the distance satisfying
the bound for random codes from Corollary~\ref{cor:rand_code_dist}. A wide range of
examples of such codes is provided by algebraic geometry codes\cite{AGCbook}. We pick Hermite codes.

Let $q := p^2$ for a prime $p$ and consider the Hermitian curve
$\mc H_q := \{(x,y)\in \F_q^2\, |\, x^p + x = y^{p+1}\}$.
The number of $\F_q$-rational points on $\mc H_q$ is $p^3$ which can be shown as 
follows.

Let $y \in \mathbb{F}_q$ be chosen arbitrarily. Since $q = p^2$, there are exactly $p^2$ such choices for $y$.

For any $y \in \mathbb{F}_q$, the expression $y^{p+1}$ is the relative norm $N_{\mathbb{F}_q/\mathbb{F}_p}(y)$. By the fundamental properties of the norm map, the image of $N_{\mathbb{F}_q/\mathbb{F}_p}$ is contained entirely within the base field $\mathbb{F}_p$. Let us denote this resulting value by $c$, so we have $c \in \mathbb{F}_p$.

For this fixed $y$ (and therefore fixed $c$), we must find the number of corresponding solutions $x \in \mathbb{F}_q$ to the curve's equation:
$$x^p + x = c$$

Notice that the left-hand side of this equation is exactly the field trace $\mathrm{Tr}_{\mathbb{F}_q/\mathbb{F}_p}(x)$. The trace map $\mathrm{Tr}_{\mathbb{F}_q/\mathbb{F}_p}: \mathbb{F}_q \to \mathbb{F}_p$ is an $\mathbb{F}_p$-linear transformation from $\mathbb{F}_q$ (which is a $2$-dimensional vector space over $\mathbb{F}_p$) to $\mathbb{F}_p$ (a $1$-dimensional vector space over $\mathbb{F}_p$).

The trace map is surjective, meaning its image has dimension $1$. By the Rank-Nullity Theorem, the dimension of its kernel is $\dim_{\mathbb{F}_p}(\ker(\mathrm{Tr})) = 2 - 1 = 1$. Therefore, the equation $x^p + x = c$ yields exactly $p$ solutions for $x \in \mathbb{F}_q$, regardless of the specific value of $c$.

Since there are $p^2$ independent choices for $y$, and every choice of $y$ yields exactly $p$ corresponding solutions for $x$, the total number of $\mathbb{F}_q$-rational points on the Hermitian curve $\mathcal{H}_q$ is (see e.g.\ \cite{AGCbook}):
$$p^2 \cdot p = p^3$$

Denote these points $P_1,\dots, P_{p^3}$. Define the Hermitian code of
degree $D$ as: 
$$\mc C_D := \{(R(P_1), \ldots, R(P_{p^3}))\, |\, R(x, y) \in \F_q[x,y], deg(R) = D, deg_y(R)\leq p\}$$

The proposition below characterizes the code parameters for $D := \lfloor p^3 / \log_2 p \rfloor$.

\begin{proposition}[Hermitian codes, \cite{AGCbook}]\label{prop:hermit_code_dist}
Let $D := \lfloor p^3 / \log_2 p \rfloor$
for a prime $p$. The
parameters of the Hermitian code $\mc C_D$ over $\F_{p^2}$ are as follows:
$n = p^3$,
$k = \dim(\mc C_D) \geq \lfloor p^3/\log_2 p - p^2/2 \rfloor$,
$d \geq \left(1 - 1/\log_2 p\right) n$.
\end{proposition}
\begin{proof}[Proof Sketch]
To understand this code, it helps to think of it as a generalized Reed-Solomon code. Instead of evaluating standard polynomials on a straight line, we evaluate generalized ``polynomial-like'' functions on the Hermitian curve $\mathcal{H}_p$ over the field $\mathbb{F}_{p^2}$.

From the geometry of the curve, we know two key facts:
1. The curve has $n = p^3$ affine points, denoted $\mathcal{P} = \{P_1, \dots, P_n\}$. These will serve as evaluation points.
2. The curve has a topological ``complexity'' parameter called the genus, $g = p(p-1)/2$. Geometrically, the genus represents the number of ``holes'' in the curve's surface. In coding theory, this acts as a penalty: higher genus means a more twisted geometry, which restricts how many independent functions we can build.

We construct the code $\mathcal{C}_D$ by taking a vector space of functions and evaluating them at all $n$ points. To control the properties of the code, we restrict our functions so that they only blow up (have poles) at a single point not in our evaluation set: the ``point at infinity'' $P_\infty$. We cap the maximum allowed degree of these poles at $D := \lfloor p^3 / \log_2 p \rfloor$. 

For the mathematical machinery to work perfectly, this degree parameter $D$ must sit in a ``Goldilocks zone'' relative to the genus and the number of points: $2g - 2 < D < n$. For primes $p \geq 3$, a quick check confirms that our choice of $D$ safely satisfies these bounds.

\textbf{Minimum Distance ($d$):} 
Just like a polynomial of degree $D$ can have at most $D$ roots, a function in our space can evaluate to zero at most $D$ times across our curve. Therefore, two distinct codewords can agree on at most $D$ coordinates. This gives a strict lower bound on the minimum distance:
$$d \geq n - D = p^3 - \lfloor \frac{p^3}{\log_2 p} \rfloor \geq p^3\left(1 - \frac{1}{\log_2 p}\right)$$

\textbf{Dimension ($k$):} 
To find the dimension of the code, we need to know how many independent functions exist in our space. Because $D > 2g - 2$, a fundamental result called the Riemann-Roch theorem acts as a clean dimension-counting formula, guaranteeing that the number of independent functions is exactly:
$$k = D - g + 1 = \lfloor \frac{p^3}{\log_2 p} \rfloor - \frac{p^2}{2} + \frac{p}{2} + 1$$
By dropping the strictly positive lower-order terms ($\frac{p}{2} + 1$), we arrive at a lower bound for the dimension: $k \geq \lfloor p^3/\log_2 p - p^2/2 \rfloor$, which completes the proof.
\end{proof}

\section{Proof system $\reslinneq{\F_q}$ and its fragments}

Throughout this section, $q = q(n)$ denotes a prime power that may grow with $n$
(the number of variables); we assume $q(n) = \mathrm{poly}(n)$ unless stated otherwise,
so that $|\F_q| = q$ is polynomially bounded.

We define a proof system $\reslinneq{\F_q}$, which is in a sense a dual version of $\reslin{\F_q}$. Its proof lines are disjunctions of linear \emph{inequalities} over a $\F_q$: 
$\left(\sum\nolimits_{i=0}^na_{1i}x_i+b_1\neq 0\right)\vee\dots\vee\left(\sum\nolimits_{i=0}^na_{ki}x_i+b_k \neq 0\right)$. 
The rules of $\reslinneq{\F_q}$ are as follows: 

\begin{prooftree}
        \centering
        \def\labelSpacing{12pt}
        \AxiomC{$C_1 \vee f \neq 0$}
        \AxiomC{$\dots$}
        \AxiomC{$C_{q-1} \vee f\neq q - 1$}
        \LeftLabel{(Resolution)}
        \TrinaryInfC{$C_1\vee \dots\vee C_{q-1}$}

\end{prooftree}

\begin{prooftree}
        \AxiomC{$C \vee 0\neq 0$}
        \LeftLabel{(Simplification)}
        \UnaryInfC{$C$}
        
        \AxiomC{$C \vee f \neq a$}
        \RightLabel{($a, b \in \F_q$)}
        \LeftLabel{(Linear combination)}
        \UnaryInfC{$C \vee f + g \neq a + b \vee g \neq b$}
        \noLine
        \BinaryInfC{}
\end{prooftree}
where $f,g$ are linear polynomials over $\F_q$ and $C, C_1,\ldots, C_{q-1}$ are clauses. The boolean axioms in this case are defined as follows: 
$$ x_i\neq 2, \ldots,  x_i\neq q - 1 \text{, ~for $x_i$ a variable}$$

\begin{proposition}
$\reslin{\F_q}$ p-simulates $\reslinneq{\F_q}$ if $q(n) = O(1)$.
\end{proposition}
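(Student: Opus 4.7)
I would approach this via a translation-plus-simulation argument. Define the translation $\tau$ on inequalities by $\tau(f \neq a) := \bigvee_{b \in \F_p \setminus \{a\}}(f = b)$ and extend to clauses pointwise, so that $|\tau(C)| \leq (p-1)|C|$; since $p$ is fixed, even a per-rule simulation of size $\operatorname{poly}(p, n)$ yields polynomial overall blowup. I will then show each axiom and rule of Res(lin$^{\neq}_{\F_p}$) translates to a Res(lin$_{\F_p}$) derivation of size bounded as above.

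Axioms and Simplification simulate immediately: $\tau(x_i \neq k) = \bigvee_{b \neq k}(x_i = b)$ weakens from the Res(lin$_{\F_p}$) boolean axiom $(x_i = 0) \vee (x_i = 1)$, and each false equation $(0 = b)$ with $b \neq 0$ arising in $\tau(C \vee 0 \neq 0)$ is removed by one application of Res(lin$_{\F_p}$) Simplification.

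The main technical step is the $p$-ary Resolution rule, with premises $P_b = \tau(C_b) \vee \bigvee_{b' \neq b}(f = b')$ for $b \in \F_p$ and conclusion $\bigvee_{b \in \F_p} \tau(C_b)$. I would introduce intermediate ``lattice'' clauses
\[
Q^S := \bigvee_{b \in S}\tau(C_b) \vee \bigvee_{b' \notin S}(f = b'), \qquad S \subseteq \F_p,
\]
noting $Q^{\{b\}} = P_b$ and $Q^{\F_p}$ is the target. Given $Q^S$ and $Q^{(S \setminus \{b\}) \cup \{c\}}$ (for any $b \in S$, $c \notin S$), I apply Res(lin$_{\F_p}$) Resolution to the equations $(f = c)$ in the first and $(f = b)$ in the second with coefficients $(1, -1)$: the resulting equation $b - c \neq 0$ is removed by Simplification, and the remaining $f$-equations in both clauses already coincide on $\{(f = b') : b' \notin S \cup \{c\}\}$, so the union is precisely $Q^{S \cup \{c\}}$ with no unwanted extras. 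Iterating up the lattice uses $O(2^p)$ clauses, constant for fixed $p$.

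For the Linear combination rule, given $T = \tau(C) \vee \bigvee_{b' \neq a}(f = b')$ I must derive $T' = \tau(C) \vee \bigvee_{b'' \neq a+b}(f + g = b'') \vee \bigvee_{c \neq b}(g = c)$. The key auxiliary is the linear-algebraic tautology
\[
V := (f = a) \vee \bigvee_{c \neq b}(g = c) \vee \bigvee_{b'' \neq a + b}(f + g = b''),
\]
valid by linearity because $g = b$ and $f + g = a + b$ together force $f = a$. Given a derivation of $V$, I weaken $T$ to $T \vee \bigvee_{c \neq b}(g = c) \vee \bigvee_{b'' \neq a+b}(f + g = b'')$, then iteratively resolve the evolving clause with $V$ on $(f = b')$ versus $(f = a)$ using coefficients $(1, -1)$: each step produces the nonzero constant $a - b'$, which Simplification removes, and eliminates $(f = b')$ cleanly because the non-$f$ equations of $V$ are already present in the weakened clause. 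After $p - 1$ iterations the working clause is exactly $T'$.

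The main obstacle will be deriving $V$ itself in polynomial size. I would proceed by resolving boolean axioms of variables appearing in $f$ and $g$ with coefficients matching those in $f$, $g$, and $f + g$ to build up a base clause containing $(f = a)$ (picking a boolean preimage $\beta \in f^{-1}(a)$ when one exists; otherwise working with the reduced variant of $V$ omitting $(f = a)$, which is already a tautology by the same linearity argument), and then weakening to fill in the remaining disjuncts; the leftover boolean equations are eliminated through further resolutions that exploit the linearity-forbidden conjunction $\{f \neq a,\, g = b,\, f + g = a + b\}$ to avoid reintroducing the excluded values. With $V$ derivable at $\operatorname{poly}(p, n)$ cost, composing the per-rule simulations yields the desired polynomial-blowup translation of any Res(lin$^{\neq}_{\F_p}$) refutation into a Res(lin$_{\F_p}$) refutation.
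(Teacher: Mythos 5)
Your translation-and-rule-by-rule plan is sound in outline, and your handling of the $p$-ary Resolution rule is actually correct and self-contained: with the all-of-$\F_p$ translation, the subset-lattice clauses $Q^S$ are derivable exactly as you describe (resolve $f=c$ against $f=b$ with coefficients $(1,-1)$, simplify away the nonzero constant), and this avoids the image-avoidance machinery that the paper invokes for this rule. The genuine gap is where you yourself flag it: the derivation of the auxiliary clause $V$. This is not a technicality but the entire nontrivial content of the simulation. In Res(lin$_{\F_p}$) the only sources of new disjuncts are weakening and resolutions with boolean axioms, and a single resolution with $x_i=0\vee x_i=1$ trades one disjunct for two; the whole difficulty is to control this accumulation over all variables of $f$ and $g$ and end with a clause of the prescribed shape in polynomial size. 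Your sketch (``resolve boolean axioms with coefficients matching those in $f$, $g$, $f+g$, pick a boolean preimage $\beta\in f^{-1}(a)$, weaken, then eliminate leftover boolean equations by further resolutions'') does not engage with this; it is a gesture, not a derivation. The fallback branch is moreover incorrect as stated: when $f=a$ has no boolean preimage, the reduced variant of $V$ omitting $(f=a)$ is only a tautology over $\{0,1\}^n$, not over $\F_p^n$ (it fails at any $\F_p$-point with $f=a$, $g=b$), so it is not ``valid by the same linearity argument,'' and deriving it is itself an image-avoidance statement for $f$ --- i.e.\ exactly the kind of fact you have not established.

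The missing ingredient is precisely what the paper imports from \cite{PT18}: polynomial-size Res(lin$_{\F_p}$) derivations, for an arbitrary linear form $h$, of the totality clause $\bigvee_{c\in h(\{0,1\}^n)}(h=c)$ (equivalently, refutations of the image avoidance principle). The paper's proof of this proposition leans on that citation for both rules; you neither cite it nor reprove it, and reproving it amounts to a careful induction on the variables of $h$ that your sketch does not contain. The gap is fillable: given the totality clauses for $g$ and for $f+g$, one resolution (of $f+g=a+b$ against $g=b$, coefficients $(1,-1)$) followed by weakening yields $V$, after which your iterated-resolution simulation of the Linear combination rule goes through; this is essentially how the paper handles that rule, except that it resolves the totality clause for $g$ directly into the translated premise rather than through an intermediate $V$. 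So the verdict: same overall strategy as the paper, with a nice elementary improvement on the Resolution rule, but the one genuinely hard step is left unproven, and its proof (or a citation to \cite{PT18}) is required for the argument to stand.
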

\begin{proof}
$\reslin{\F_q}$ can encode an inequality $f\neq a$ as disjunction $[[ f\neq a ]] := \bigvee\limits_{b \in f(\{0,1\}^n), b\neq a}f = b$. For a $\reslinneq{\F_q}$ clause 
$C = f_1 \neq a_1 \vee \dots \vee f_m \neq a_m$ denote $[[ C]] := [[ f_1 \neq a_1]] \vee \dots \vee [[ f_m \neq a_m ]]$ the $\reslin{\F_q}$ encoding of $C$.
We just need to show that $\reslin{\F_q}$ simulates $\reslinneq{\F_q}$ resolution and linear combination rules.

In \cite{PT18} it was shown that $\reslin{\F_q}$ has polynomial size refutations of the \emph{image avoidance principle}: 
$\mathsf{ImAv}(f) := \{[[ f\neq a]]\}_{a \in f(\{0,1\}^n)}$.

\begin{proposition*}[Proposition 30 in \cite{PT18}]
Let $\mc R$ be a finite ring, $f = a_1x_1 + \dots + a_n x_n$ a linear form over $\mc R$.
Denote $\mathsf{im}(f) := f(\{0,1\}^n) = \{f(\nu) : \nu\in\{0,1\}^n\}$ the image of $f$
on boolean assignments,
$s_f$ the total size of the binary encoding of $\mathsf{im}(f)$ as a set of field elements, and
$d_f := |\mathsf{im}(f)|$ the number of distinct values. Then there exists a tree-like
$\reslin{\mc R}$ refutation of $\mathsf{ImAv}(f) := \{[[f\neq a]]\}_{a \in \mathsf{im}(f)}$
of size $O(s_fn^{2d_f})$.
\end{proposition*}

Another fact from \cite{PT18} that we need: short derivations of 
$\mathsf{Im}(f) := \bigvee_{c \in f(\{0,1\}^n)}f = c$

\begin{proposition*}[Proposition 12 in \cite{PT18}]
Let $f = a_1x_1 + \dots + a_n x_n + b$ be a linear polynomial over a ring $\mc R$. There exists a $\reslin{\mc R}$
derivation of $\mathsf{Im}(f)$ of size polynomial in $|\mathsf{im}(f)|$.
\end{proposition*}

Weakening clauses $\{[[ C_a ]] \vee [[ f\neq a]]\}_{a \in f(\{0,1\}^n)}$ to 
$\{ \bigvee_{b \in f(\{0,1\}^n)}[[ C_b ]]  \vee [[ f\neq a]]\}_{a \in f(\{0,1\}^n)}$ and applying refutation of $\mathsf{ImAv}$($f$) (the one from Proposition 30 in \cite{PT18} restated above) weakened with 
$\bigvee_{a \in f(\{0,1\}^n)}[[ C_a ]]$ we derive $\bigvee_{a \in f(\{0,1\}^n)}[[ C_a ]]$. This simulates resolution rule.

Consider a clause $[[ C ]] \vee [[ f \neq a ]]$. By Proposition~12 of \cite{PT18}
(restated above) there exists a $\reslin{\F_q}$ derivation of
${\bigvee_{c \in g(\{0,1\}^n), c\neq b}g = c}\,\vee\, g = b$
of polynomial size. By resolving each $f = c$ in $[[ f \neq a ]]$ with $g = b$ in ${\bigvee_{c \in g(\{0,1\}^n), c\neq b}g = c}\, \vee\, {g = b}$ we derive 
$[[ C ]] \vee [[ f + \alpha g \neq a + \alpha b ]] \vee [[ g\neq b ]]$. This simulates linear combination rule.

\end{proof}

Henceforth, all proof systems are considered as proof systems for the \coNP-complete
language $\LinSys_{\mc R}$ of 0-1 unsatisfiable linear systems $A\cdot x = b$ over
a ring $\mc R$.

\begin{definition}[$\LinSys_{\mc R}$]
Let $\mc R$ be a ring with $char(\mc R)\notin \{2,3,4\}$. $\LinSys_{\mc R}$ is \coNP\ language of pairs $(A, b)$ such that $A\cdot x = b$ is a 0-1 unsatisfiable linear system over $\mc R$.
\end{definition}

\begin{proposition}[\cite{PT18}]
$\LinSys_{\mc R}$ is \coNP-complete.
\end{proposition}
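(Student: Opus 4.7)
The plan is to prove coNP-completeness in the standard two-step manner: verify membership in coNP, then reduce a known coNP-hard problem to LinSys$^{(p)}$.

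Membership in coNP is immediate, since a 0-1 vector $x \in \{0,1\}^n$ with $A\cdot x = b$ is a polynomially checkable certificate that $(A,b) \notin \text{LinSys}^{(p)}$. The substance of the argument is the hardness reduction. I would reduce from 3-CNF UNSAT, which is coNP-complete since 3-SAT is NP-complete. Given a 3-CNF $\phi$ on variables $x_1,\dots,x_n$ with clauses $C_1,\dots,C_m$, I construct a linear system over $\F_p$ whose 0-1 solutions correspond exactly to satisfying assignments of $\phi$.

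The encoding exploits the assumption $p \geq 5$. Write each literal arithmetically by $x_j \mapsto x_j$ and $\neg x_j \mapsto 1 - x_j$, so that for any 0-1 assignment each literal becomes an element of $\{0,1\}$. For each clause $C_i = \ell_{i,1} \vee \ell_{i,2} \vee \ell_{i,3}$, introduce two fresh 0-1 auxiliary variables $z_{i,1}, z_{i,2}$ and add the equation
\[
\ell_{i,1} + \ell_{i,2} + \ell_{i,3} \;=\; 1 + z_{i,1} + z_{i,2} \pmod{p}.
\]
The collection of these $m$ equations in the $n + 2m$ variables $x_1,\dots,x_n,\{z_{i,k}\}$ defines the matrix $A$ and vector $b$. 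This construction is clearly polynomial-time computable from $\phi$.

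For correctness, observe that under any 0-1 assignment the left-hand side lies in $\{0,1,2,3\}$, while the right-hand side lies in $\{1,2,3\}$; because $p \geq 5$, these four values remain pairwise distinct in $\F_p$, so the equation holds iff the LHS is in $\{1,2,3\}$, i.e.\ iff at least one literal of $C_i$ evaluates to $1$. Thus if $\phi$ has a satisfying assignment, one extends it to a 0-1 solution of $A\cdot x=b$ by setting, for each clause with $k \in \{1,2,3\}$ true literals, $(z_{i,1},z_{i,2})$ to a 0-1 pair summing to $k-1$; conversely, any 0-1 solution to $A\cdot x = b$ projected onto $x_1,\dots,x_n$ satisfies every clause of $\phi$. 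Hence $\phi \in \text{UNSAT}$ iff $(A,b) \in \text{LinSys}^{(p)}$. Combined with membership in coNP, this establishes coNP-completeness.

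There is no real obstacle here; the only subtle point is ensuring the encoding cannot be ``cheated'' by the new variables, which is precisely what the inequality $p \geq 5$ guarantees. (If one wished to drop this hypothesis, the encoding would need to be adapted for small characteristics, e.g.\ by using more auxiliary variables or by encoding clauses of bounded size differently.)
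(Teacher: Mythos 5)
Your proof is correct. Note that the paper itself gives no argument for this proposition: it is stated as imported from \cite{PT18}, so there is no in-paper proof to compare against. Your two-step argument is the standard one and checks out: membership in coNP is exactly as you say (a 0-1 solution certifies non-membership), and the reduction from 3-CNF UNSAT is sound. In particular, the key arithmetic point is handled correctly: under a 0-1 assignment the difference between the two sides of your clause equation lies in $\{-3,\dots,2\}$, so for $p\geq 5$ the equation holds in $\mathbb{F}_p$ iff it holds over the integers, i.e.\ iff the clause has at least one true literal; and the auxiliary variables $z_{i,1},z_{i,2}$ need no explicit booleanity constraints because LinSys$^{(p)}$ is defined via unsatisfiability over $\{0,1\}$-assignments of all variables. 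The constants introduced by writing $\neg x_j$ as $1-x_j$ simply move into the right-hand-side vector $b$, so the output of the reduction is indeed a pair $(A,b)$, and $\phi$ is unsatisfiable iff $(A,b)\in\text{LinSys}^{(p)}$.
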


\begin{remark}
The characteristic restriction $char(\mc R)\notin\{2,3,4\}$ in the definition above is
from \cite{PT18}. Gryaznov~\cite{Gryaznov19} subsequently proved \coNP-completeness
of the analogous language for all characteristics other than $2$ and $3$
(i.e.\ removing the restriction $char\neq 4$). The results of the present paper apply
for $char(\F_q)\geq 5$, which is consistent with both references.
\end{remark}

We now define a proof system $\LinDags_{\F_q}$, which is p-equivalent to $\reslinneq{\F_q}$ and is just a more convenient way to represent $\reslinneq{\F_q}$ proofs for 
$\LinSys_{\F_q}$.

\begin{definition}
Let $\mathcal{F}=\{l_1(x_1,\ldots,x_n)=0, \ldots, l_m (x_1,\ldots,x_n)= 0\}$ be a linear system over $\F_q$  without solutions in the boolean cube $\{0,1\}^n\subset \F_q^n$.
A $\LinDags_{\F_q}$ refutation $T$ of $\mathcal{F}$ is a dag such that:
\begin{itemize}
\item Every node $v\in T$ is marked with a 0-1 unsatisfiable system $\mathcal{F}_v$. If $v$ has outgoing degree $0$ it is a \textbf{terminal node}, otherwise it is a \textbf{splitting node},
it is marked with a linear form $f_v$ and has outgoing degree $|f_v(\{0,1\}^n)|$. There is exactly one node with ingoing degree $0$ (\textbf{the root}). 
\item If $r \in T$ is the root then $\mathcal{F}_r = \mathcal{F}$.
\item If $v \in T$ is a splitting node then outgoing edges from $v$ lead to nodes $\{v_a\}_{a \in f_v(\{0,1\}^n)}$ and an edge $(v, v_a)$ is marked with equation $f_v = a$. 
The following condition must hold: 
$$\langle\mathcal{F}_{v_a}\rangle\subseteq \langle\mathcal{F}_v, f_v = a\rangle, \quad a\in f_v(\{0,1\}^n)$$
\item If $v \in T$ is a leaf then $\mathcal{F}_v$ has no solutions in $\F_q^n$.
\end{itemize}
\end{definition}

\begin{definition}\label{binDagsDef}
The proof system $\BinDags_{\F_q}$ is obtained by restricting $\LinDags_{\F_q}$ proofs to splittings on variables, that is $\BinDags_{\F_q}$ proofs are
precisely those $\LinDags_{\F_q}$ proofs, where  all splitting nodes are marked with variables.
\end{definition}

The following simulation holds:

\begin{proposition}
The proof system $\BinDags_{\F_q}$ is p-equivalent to $\LinDags_{\F_q}$.
\end{proposition}
\begin{proof}
$\BinDags_{\F_q}$ p-simulates 
$\LinDags_{\F_q}$ (and thus is p-equivalent)
via dynamic programming: a branching on a linear form 
$f := a_1x_1 +\ldots +a_nx_n$ an be simulated by sequentially branching on 
$f_i := a_1x_1 +\ldots +a_ix_i$ inductively on $i$ using binary splittings at each step.
\end{proof}

\begin{definition}
Let $\mathcal{F}=\{l_1(x_1,\ldots,x_n)=0, \ldots, l_m (x_1,\ldots,x_n)= 0\}$ be a system of linear equations over $\F_p$  without solutions in the boolean cube $\{0,1\}^n\subset \F_q^n$.
A $\BinRegDags_{\F_q}$ refutation $T$ of $\mathcal{F}$ is a dag such that:
\begin{itemize}
\item Every node $v\in T$ is marked with a 0-1 unsatisfiable system $\mathcal{F}_v$. If $v$ has outgoing degree $0$ it is a \textbf{terminal node}, otherwise it is a \textbf{splitting node},
it is marked with a variable $x_v$ and has outgoing degree $2$. There is exactly one node with ingoing degree $0$ (\textbf{the root}). 
\item If $r \in T$ is the root then $\mathcal{F}_r = \mathcal{F}$.
\item If $v \in T$ is a splitting node then outgoing edges from $v$ lead to nodes $v_0, v_1$ and an edge $(v, v_b)$ is marked with equation $x_v = b$, $b\in \{0,1\}$. 
The following condition must hold: 
$$\langle\mathcal{F}_{v_b}\rangle\subseteq \langle\mathcal{F}_v\rangle\rst_{x_v\leftarrow b}, b\in \{0,1\}$$
\item If $v \in T$ is a leaf then $\mathcal{F}_v$ has no solutions in $\F_q^n$.
\end{itemize}
\end{definition}

\begin{remark}[Capabilities of $\BinRegDags_{\F_q}$]\label{rem:brd-cap}
Although $\BinRegDags_{\F_q}$ enforces a strong regularity condition,
it strictly extends dynamic programming in two ways.

\emph{(1)~Dynamic programming.}
For every $(A,b)\in\LinSys_{\F_q}$ there is a $\BinRegDags_{\F_q}$ refutation of
size $O(n\cdot|A(\{0,1\}^n)|)$, obtained by connecting the layers
$L_i:=\{(A\cdot x)\rst_{x_1\leftarrow a_1,\ldots,x_i\leftarrow a_i}=b
\mid(a_1,\ldots,a_i)\in\{0,1\}^i\}$.

\emph{(2)~Dynamic programming with subsystem reduction.}
A smart dynamic-programming strategy may use any derived system
$\mathcal{G}\subseteq\langle\mathcal{F}\rangle$, not only~$A$ itself.
By Gaussian elimination on the first $\varepsilon n$ columns, assume
$\mathcal{G}$ has \emph{block form}
\[
  \mathcal{G} \;=\;
  \begin{pmatrix} I & D \\ C & B \end{pmatrix},
\]
where $I$ is the $\varepsilon n\times\varepsilon n$ identity block,
$D$ fills the remaining $n{-}\varepsilon n$ columns in those same rows,
$C$ is the $(m{-}\varepsilon n)\times\varepsilon n$ block below~$I$
with $\mathrm{rank}(C)=r$, and $B$ is the
$(m{-}\varepsilon n)\times(n{-}\varepsilon n)$ bottom-right block.

We now give an example of how a system $\mc G$ in principle may collapse step by
step losing equations after substitutions.
Consider branching on $x_1,\ldots,x_{\varepsilon n}$ in sequence, forming the
\emph{ladder} of Figure~\ref{fig:ladder}.  The two cases at each rung
behave very differently:
\begin{itemize}
\item \textbf{$x_i\leftarrow 0$.}  Column~$i$ is removed but
  \emph{no equation is discarded}; the rectangle loses only one column.
\item \textbf{$x_i\leftarrow 1$.}  We assume that setting
  $x_i\leftarrow 1$ allows rows~$1,\ldots,i$ to be
  \emph{discarded}(given all previous variable are set)---we assume the subsystem formed by remaining equations
  becomes 0-1-unsatisfiable.
  The rectangle then drops to $m{-}i$ rows and $n{-}i$ columns.
  However, the right-hand side of the remaining rows still depends
  on the partial assignment~$\rho=(x_1,\ldots,x_i)$ through the
  block~$C$: since $\mathrm{rank}(C)=r$, the map $\rho\mapsto C\rho$
  takes at most~$2^r$ distinct values, so every $x_i{=}1$ arc from
  rung~$i{-}1$ lands on \textbf{one of at most~$2^r$ nodes} in spine
  with $m{-}i$ equations.
\end{itemize}

The reachable (rung,~equation-count) pairs form a lower-left triangle:
position $(j,\,m{-}k)$ with $0\le k\le j\le\varepsilon n$, giving
$O((\varepsilon n)^2)$ distinct positions.  At each position the
coefficient matrix is determined, but the right-hand side depends on
the partial assignment~$\rho$ through the block~$C$: since
$\mathrm{rank}(C)=r$, the map $\rho\mapsto C\rho$ takes at most~$2^r$
distinct values on~$\{0,1\}^{\varepsilon n}$.  Hence \emph{every
position hosts at most $2^r$ systems} (same matrix, different RHS),
and the total DAG size is
\[
  O\!\left((\varepsilon n)^{2}\cdot 2^{r}\right),
\]
far below the na\"{\i}ve $2^{\varepsilon n}$.  With $r=O(\log n)$
this is polynomial; with $r=O(1)$ it is \emph{quadratic} in
$\varepsilon n$. 

Our $(s,r)$-robustness condition (Definition~\ref{def:robust}) is
designed precisely to close this way of refuting: it requires that
\emph{every} subsystem $\mathcal{G}\subseteq\langle\mathcal{F}\rangle$
whose restriction to $\mathrm{supp}(\rho)$ is 0-1-unsatisfiable must
have rank at least~$r$ on $\mathrm{supp}(\rho)$.  Consequently, no
derived system of rank less than~$r$ can serve as a witness, and any
dynamic-programming-with-reduction strategy requires at least~$2^r$
sub-problems and cannot yield a short refutation.
\end{remark}

\newcommand{\drawmatrix}[8][M]{%
  \begin{scope}[shift={(#2,#3)}]
    \pgfmathsetmacro{\cellsz}{1.6}%
    \pgfmathsetmacro{\W}{#5*\cellsz/10}%
    \pgfmathsetmacro{\H}{#4*\cellsz/10}%
    \pgfmathsetmacro{\idW}{#7*\cellsz/10}%
    \pgfmathsetmacro{\idH}{#6*\cellsz/10}%
    \pgfmathsetmacro{\botH}{\H-\idH}%
    \pgfmathsetmacro{\rightW}{\W-\idW}%
    \fill[#8, opacity=0.20] (0.10,0.10) rectangle ++(\W,\H);
    \fill[#8, opacity=0.35] (0.05,0.05) rectangle ++(\W,\H);
    \fill[green!12] (\idW,0) rectangle (\W,{\H-\idH});
    \pgfmathtruncatemacro{\idN}{#6}%
    \ifnum\idN>0
      \fill[gray!8] (\idW,{\H-\idH}) rectangle (\W,\H);
    \fi
    \ifdim\idW pt>0.05pt
      \ifdim\botH pt>0.05pt
        \fill[red!12] (0,0) rectangle (\idW,{\H-\idH});
      \fi
    \fi
    \ifnum\idN>0
      \fill[blue!25] (0,{\H-\idH}) rectangle (\idW,\H);
      \pgfmathsetmacro{\step}{\idH/\idN}%
    \fi
    \draw (0,0) rectangle (\W,\H);
    \ifdim\idH pt>0.05pt
      \draw[thin, gray] (0,{\H-\idH}) -- (\W,{\H-\idH});
    \fi
    \ifdim\idW pt>0.05pt
      \draw[thin, gray] (\idW,0) -- (\idW,\H);
    \fi
    \coordinate (#1) at (\W/2, \H/2);
    \coordinate (#1-e) at (\W, \H/2);
    \coordinate (#1-w) at (0, \H/2);
    \coordinate (#1-n) at (\W/2, \H);
    \coordinate (#1-s) at (\W/2, 0);
    \coordinate (#1-ne) at (\W, \H);
    \coordinate (#1-se) at (\W, 0);
  \end{scope}
}
\def\dx{3.2}
\def\dy{-2.0}

\begin{figure}[htbp]
\centering
\scalebox{0.75}{%
\begin{tikzpicture}[font=\small, >=Stealth,
  declare function={cs=1.6;},
]
\node[font=\scriptsize] at (0.56, 1.55) {rung $0$};
\node[font=\scriptsize] at (\dx+0.48, 1.55) {rung $1$};
\node[font=\scriptsize] at (2*\dx+0.40, 1.55) {rung $2$};
\node[font=\scriptsize] at (3.2*\dx, 1.55) {$\cdots$};
\node[font=\scriptsize] at (3.8*\dx+0.24, 1.55) {rung $\varepsilon n$};
\node[font=\tiny, text=blue!60!black] at (0.56, 1.25) {$n$ cols};
\node[font=\tiny, text=blue!60!black] at (\dx+0.48, 1.25) {$n{-}1$};
\node[font=\tiny, text=blue!60!black] at (2*\dx+0.40, 1.25) {$n{-}2$};
\node[font=\tiny, text=blue!60!black] at (4.0*\dx+0.24, 1.25) {$n{-}\varepsilon n$};
\drawmatrix[N00]{0}{0}{7}{10}{3}{3}{gray!15}
\drawmatrix[N01]{\dx}{0}{7}{9}{3}{2}{gray!15}
\drawmatrix[N02]{2*\dx}{0}{7}{8}{3}{1}{gray!15}
\node[font=\normalsize] at (3.6*\dx, 0.35) {$\cdots$};
\drawmatrix[N0k]{3.8*\dx}{0}{7}{7}{3}{0}{gray!15}
\draw[->, thick] (N00-e) -- (N01-w)
  node[midway,above,font=\tiny]{$x_1{=}0$};
\draw[->, thick] (N01-e) -- (N02-w)
  node[midway,above,font=\tiny]{$x_2{=}0$};
\draw[->] (N02-e) -- ++(0.5,0);
\node[font=\tiny, anchor=east] at (-0.15, 0.56) {$m$ rows};
\drawmatrix[N11]{\dx}{\dy}{6}{9}{2}{2}{orange!18}
\drawmatrix[N12]{2*\dx}{\dy}{6}{8}{2}{1}{orange!18}
\node[font=\normalsize] at (3.6*\dx, \dy+0.3) {$\cdots$};
\drawmatrix[N1k]{3.8*\dx}{\dy}{6}{7}{2}{0}{orange!18}
\draw[->, thick] (N11-e) -- (N12-w)
  node[midway,above,font=\tiny]{$x_2{=}0$};
\draw[->] (N12-e) -- ++(0.5,0);
\draw[->, orange!70!black, thick] (N00-s) -- (N11-n)
  node[midway,left,font=\tiny]{$x_1{=}1$};
\node[font=\tiny, anchor=east] at (-0.15+\dx, \dy+0.48) {$m{-}1$};
\drawmatrix[N22]{2*\dx}{2*\dy}{5}{8}{1}{1}{orange!30}
\node[font=\normalsize] at (3.6*\dx, 2*\dy+0.25) {$\cdots$};
\drawmatrix[N2k]{3.8*\dx}{2*\dy}{5}{7}{1}{0}{orange!30}
\draw[->] (N22-e) -- ++(0.5,0);
\draw[->, orange!70!black, thick] (N01-s) -- (N22-n)
  node[pos=0.3,right,font=\tiny]{$x_2{=}1$};
\draw[->, orange!70!black] (N11-s) -- (N22-n)
  node[pos=0.5,right,font=\tiny]{$x_2{=}1$};
\node[font=\tiny, anchor=east] at (-0.15+2*\dx, 2*\dy+0.40) {$m{-}2$};
\node[font=\Large] at (3.2*\dx, 2.7*\dy) {$\ddots$};
\drawmatrix[Nkk]{3.8*\dx}{3.8*\dy}{4}{7}{0}{0}{red!18}
\node[font=\tiny, anchor=east] at (-0.15+3.8*\dx, 3.8*\dy+0.32) {$m{-}\varepsilon n$};
\node[font=\tiny, text=blue!70!black]  at (0.24, 0.88) {$I$};
\node[font=\tiny, text=gray!70!black]  at (0.72, 0.88) {$D$};
\node[font=\tiny, text=red!60!black]   at (0.24, 0.24) {$C$};
\node[font=\tiny, text=green!50!black] at (0.72, 0.24) {$B$};
\node[font=\tiny, text=blue!70!black]  at (\dx+0.16, 0.88) {$I$};
\node[font=\tiny, text=green!50!black] at (\dx+0.72, 0.24) {$B$};
\node[font=\tiny, text=blue!70!black]  at (\dx+0.16, \dy+0.76) {$I$};
\node[font=\tiny, text=green!50!black] at (\dx+0.72, \dy+0.24) {$B$};
\draw[decorate, decoration={brace, amplitude=4pt}]
  (-0.1, 1.75) -- (4.0*\dx+1.2, 1.75)
  node[midway, above=5pt, font=\scriptsize, align=center]
  {$x_i{=}0$ spine: $\varepsilon n{+}1$ systems (same row count)};
\draw[decorate, decoration={brace, amplitude=5pt}]
  (3.8*\dx+1.6, 1.2) -- (3.8*\dx+1.6, 3.8*\dy-0.2)
  node[midway, right=6pt, font=\scriptsize, align=left]
  {$O((\varepsilon n)^2)$ positions\\[2pt]
   each: $\leq 2^r$ copies\\(same matrix, diff.\ RHS)};
\node[font=\tiny, text=blue!60!black, anchor=north, align=center]
  at (2*\dx+0.4, 2*\dy-0.2)
  {$x_1{=}0,\,x_2{=}1$ and $x_1{=}1,\,x_2{=}1$\\
   both reach $m{-}2$ rows (merge)};
\draw[->, gray!50, very thin] (N0k-e) -- ++(0.3,-0.1);
\draw[->, gray!50, very thin] (N0k-e) -- ++(0.3, 0.1);
\draw[->, gray!50, very thin] (N1k-e) -- ++(0.3,-0.1);
\draw[->, gray!50, very thin] (N1k-e) -- ++(0.3, 0.1);
\draw[->, gray!50, very thin] (N2k-e) -- ++(0.3,-0.1);
\draw[->, gray!50, very thin] (N2k-e) -- ++(0.3, 0.1);
\draw[->, gray!50, very thin] (Nkk-e) -- ++(0.3,-0.1);
\draw[->, gray!50, very thin] (Nkk-e) -- ++(0.3, 0.1);
\begin{scope}[shift={(0, 3.8*\dy - 1.0)}]
  \fill[blue!25] (0,0) rectangle (0.25,0.25);
  \draw[blue!50] (0,0) rectangle (0.25,0.25);
  \fill[blue!70] (0.125,0.125) circle (0.3mm);
  \node[font=\tiny, anchor=west] at (0.35, 0.125)
    {$I$: identity block (height \& width shrink per rung)};
  \fill[gray!8] (0,-0.15) rectangle (0.25,-0.40);
  \draw[thin,gray] (0,-0.15) rectangle (0.25,-0.40);
  \node[font=\tiny, anchor=west] at (0.35, -0.275)
    {$A$: top-right (height shrinks on $x_i{=}1$; width fixed)};
  \fill[red!12] (0,-0.55) rectangle (0.25,-0.80);
  \draw[thin,gray] (0,-0.55) rectangle (0.25,-0.80);
  \node[font=\tiny, anchor=west] at (0.35, -0.675)
    {$C$: rank-$r$ block (width shrinks; height $m{-}\varepsilon n$ fixed; controls RHS)};
  \fill[green!12] (0,-0.95) rectangle (0.25,-1.20);
  \draw[thin,gray] (0,-0.95) rectangle (0.25,-1.20);
  \node[font=\tiny, anchor=west] at (0.35, -1.075)
    {$B$: bottom-right (\emph{identical} in every system)};
  \fill[gray!15, opacity=0.20] (0.12,-1.35) rectangle ++(0.25,-0.25);
  \fill[gray!15, opacity=0.35] (0.06,-1.41) rectangle ++(0.25,-0.25);
  \fill[gray!15] (0,-1.47) rectangle ++(0.25,-0.25);
  \draw (0,-1.47) rectangle ++(0.25,-0.25);
  \node[font=\tiny, anchor=west] at (0.35, -1.57)
    {stacked sheets: $\leq 2^r$ copies (same matrix, different RHS)};
\end{scope}
\node[draw, rounded corners=3pt, fill=yellow!15, font=\small, align=center]
  at (0.9*\dx, 3.5*\dy)
  {total $\;\leq\;
   \underbrace{O((\varepsilon n)^{2})}_{\text{triangle positions}}
   \times \underbrace{2^{r}}_{\text{RHS variants}}
   = O((\varepsilon n)^{2}\cdot 2^{r})$};
\end{tikzpicture}}
\caption{Branching ladder as a triangular grid ($\varepsilon n$ rungs).
Blocks: \textcolor{blue!60}{$I$}~identity, $D$~top-right,
\textcolor{red!50}{$C$}~rank-$r$, \textcolor{green!40!black}{$B$}~fixed.
Horizontal arrows ($x_i{=}0$) keep all rows; orange arrows ($x_i{=}1$)
discard rows $1,\ldots,i$ and merge onto one of at most $2^r$ nodes.
Total: $O((\varepsilon n)^2\cdot 2^r)$.}
\label{fig:ladder}
\end{figure}

In the same spirit we also define a proof system $\LinTrees_{\F_q}$, which is p-equivalent to tree-like $\reslinneq{\F_q}$.

\begin{definition}\label{linTreesDef}
Let $\mathcal{F}=\{l_1(x_1,\ldots,x_n)=0, \ldots, l_m (x_1,\ldots,x_n)= 0\}$ be a linear system over $\F_q$ without solutions in the boolean cube $\{0,1\}^n\subset \F_q^n$.
A $\LinTrees_{\F_q}$ refutation $T$ of $\mathcal{F}$ is a tree such that:
\begin{itemize}
\item Every internal node $u$ is labeled by a linear form $f_u(x_1, \ldots, x_n)$ and outgoing edges of $u$ correspond to elements in $f_u(\{0,1\}^n)\subseteq \F_q$. 
If $a\in f_u(\{0,1\}^n)$ the corresponding edge is labeled by the equality $f_u = a$.
\item For a node $u$, let $\mathcal{G} = \{f_1=a_1, \ldots, f_k=a_k\}$ be a set of equations written on a path from the root to $u$. The node $u$ is a leaf iff the set of linear equations $\mathcal{F}\cup \mathcal{G}$ has no solutions over the whole $\F_q^n$.
\end{itemize}
\end{definition}

\begin{proposition}\label{peq}
$\LinDags_{\F_q}$ and $\reslinneq{\F_q}$ are p-equivalent as proof systems for
$\LinSys_{\F_q}$.
\end{proposition}

\begin{proposition}\label{peqT}
$\LinTrees_{\F_q}$ and tree-like $\reslinneq{\F_q}$ are p-equivalent as proof systems for
$\LinSys_{\F_q}$.
\end{proposition}

Proofs of Proposition~\ref{peq} and Proposition~\ref{peqT} are completely standard: a clause $C$ in a $\reslinneq{\F_q}$ (resp. tree-like $\reslinneq{\F_q}$) refutation
corresponds to a node marked with the system of equations $\neg C$ in $\LinDags_{\F_q}$ (resp. $\LinTrees_{\F_q}$) refutation. See, for example, \cite{PT18} for detailed
exposition of analogous correspondence between tree-like $\reslin{\F_q}$ and nondeterministic linear decision trees.

\subsection{Prover-Delayer games}

We now relate tree-like $\reslinneq{\F_q}$ to tree-like $\reslin{\F_q}$ by relating Prover-Delayer games in both cases. The games are defined as follows:

\begin{enumerate}
\item Tree-like $\reslin{\F_q}$ game on a set of 0-1 unsatisfiable linear equations $\mathcal{F}$.
\begin{itemize}
\item(Position) At every position there is a set of inequalities $$\mathcal{H}_{\neq} = \{h_1(x_1, \ldots, x_n) \neq c_1, \ldots, h_m(x_1, \ldots, x_n) \neq c_m\}$$
\item(Starting position) Game starts with $\mathcal{H}_{\neq} = \{0\neq a\}_{a\in \F_q, a \neq 0}$. 
\item(Round) Prover chooses inequalities $h \neq c \in \mathcal{H}_{\neq}$ and $f \neq a$, $g \neq b$ such that $(f - a) + (g - b) \equiv h - c$. Delayer either
chooses one of $f \neq a$, $g \neq b$ to be added to $\mathcal{H}_{\neq}$ or declares the position a branching point. In the latter case Prover chooses,
which of  $f \neq a$, $g \neq b$ will be added to $\mathcal{H}_{\neq}$.
\item(Endgame position) Game ends if $\mathcal{H}_{\neq}$ contains $x\neq 0$, $x\neq 1$ for some variable $x$, contains $f \neq a$ for some $f = a \in \mathcal{F}$, $a\in\F_q$ or 
contains $0 \neq 0$. 
\end{itemize}
\item $\LinTrees_{\F_q}$ game on a set of 0-1 unsatisfiable linear equations $\mathcal{F}$.
\begin{itemize}
\item(Position) At every position there is a set of equations $$\mathcal{H} = \{h_1(x_1, \ldots, x_n) = c_1, \ldots, h_m(x_1, \ldots, x_n) = c_m\}$$
\item(Starting position) Game starts with $\mathcal{H} = \mathcal{F}$. 
\item(Round) Prover chooses a linear form $f$. Delayer either chooses an equation $f = a$, $a\in f(\{0,1\}^n)$ to be added to $\mathcal{H}$ or declares
the position a branching point and chooses $a_1\neq a_2\in f(\{0,1\}^n)$. In the latter case Prover chooses,
which of  $f = a_1$, $f = a_2$ will be added to $\mathcal{H}$.
\item(Endgame position) Game ends if $\mathcal{H}$ has no solutions in $\F_q^n$.
\end{itemize}
\end{enumerate}

It is easy to see that if there exists Delayer's strategy guaranteeing selection of $h$ branching points on $\mathcal{F}$ then for every
$\LinTrees_{\F_q}$ (resp.\ tree-like $\reslin{\F_q}$) refutation $T$ of $\mathcal{F}$ there exists an embedding
of the full binary tree of depth $h$ into $T$.

\begin{proposition}
If there exists a strategy with a starting position $\mc F$ for Delayer in the tree-like
$\reslin{\F_q}$ game (respectively, $\LinTrees_{\F_q}$ game) that guarantees at least $h$ branching points, then the size of
a tree-like $\reslin{\F_q}$ (respectively $\LinTrees_{\F_q}$) refutation of $\mc F$
must be at least $2^h$.
\end{proposition}

The proof is completely analogous to the proof of Lemma 31 in \cite{PT18}. See \cite{PT18} for more details on Prover-Delayer games for tree-like $\reslin{\F_q}$.

\begin{theorem}\label{delayerEquiv}
Let $\mathcal{F}$ be 0-1 unsatisfiable linear system. If there exists Delayer's strategy guaranteeing selection of $h$ branching
points for $\LinTrees_{\F_q}$ Prover-Delayer game on $\mathcal{F}$ then there exists a Delayer's strategy guaranteeing selection of $h$ branching points for tree-like $\reslin{\F_q}$ game as well.
\end{theorem}
\begin{proof}
We play two games simultaneously. When Prover makes a decision in tree-like 
$\reslin{\F_q}$ game we make a decision for Prover in $\LinTrees_{\F_q}$ game, see what decides
Delayer and make a decision for Delayer in tree-like $\reslin{\F_q}$ game.

For every inequality $f\neq a$ added in tree-like $\reslin{\F_q}$ game there will be exactly one equation $f = b$, where $a\neq b$, added in $\LinTrees_{\F_q}$ game. It is easy
to see that tree-like $\reslin{\F_q}$ game cannot end earlier than $\LinTrees_{\F_q}$ game.
After $\LinTrees_{\F_q}$ game ends Delayer continues arbitrarily choosing equations. The number of branching points will coincide in two games.

Assume Prover in tree-like $\reslin{\F_q}$ game chooses $h \neq a$ among added inequalities and $f \neq b, g \neq c$ such that $(f - b) + (g - c) \equiv h - a$. By our induction hypothesis there must be equation $h = a'$ added in $\LinTrees_{\F_q}$ game. Prover in 
$\LinTrees_{\F_q}$ game chooses linear form $f$.

If Delayer in $\LinTrees_{\F_q}$ game chooses equality $f = b'$, then Delayer in tree-like $\reslin{\F_q}$ game chooses $f \neq b$ iff $b\neq b'$ otherwise it chooses $g\neq c$.
Note that after addition of $f = b'$ in $\LinTrees_{\F_q}$ game equation $g = c + a' - a + b - b'$ must be in the span of added equations and $\mathcal{F}$.

If Delayer in $\LinTrees_{\F_q}$ game declares the current position a branching point and chooses $f = b_1$, $f = b_2$ equations, then Delayer
in tree-like $\reslin{\F_q}$ game declares the position a branching point and if Prover chooses $f \neq b$ or $g\neq c$ then make a corresponding choice for Prover in $\LinTrees_{\F_q}$ game
(that is so that $f = b'$ for $b \neq b'$ is added if $f\neq b$ was chosen or $g = c'$ for $c \neq c'$ is added if $g\neq c$ was chosen).

\end{proof}

\section{Hard instances based on error correcting codes}\label{hardInstSec}

Denote $\ECC^{n, k, d}_{\F_q} \subset \LinSys_{\F_q}$ the set of instances 
$(A, b)\in \LinSys_{\F_q}$ such that $A$ is generator $k \times n$ matrix for ECC
with parameters 
$(n, k, d)$. For a lower bound for $\BinRegDags_{\F_q}$ refutations in Section~\ref{bindagLB} we will need a stronger notion of a \textbf{robust} ECC instance. See discussion after Definition~\ref{binDagsDef} for motivating example.

\begin{definition}[Robust linear systems]\label{def:robust}
Let an instance $(A, b) \in \ECC^{n, k, d}_{\F_q}$ be written as $\mc F =\{f_1 = a_1, \ldots, f_m = a_m\}$. Consider pairs $(\rho, \mc G)$
such that $\rho$ is a partial assignment with $|supp(\rho)| = s$, where $supp(\rho)$
denotes the set of variables on which $\rho$ is defined,
and $\mc G$ is a linear system such that:
\begin{enumerate} 
\item\label{sub} $\mc G \subseteq \langle \mc F\rangle$.
\item $\mc G\rst_{\rho}$ is 0-1 unsatisfiable.
\item $\mc G_{[supp(\rho)]}$ depends on all variables in $supp(\rho)$.
\footnote{$\mc G_{[I]}$ is the submatrix formed by
the columns corresponding to variables in $I$. See Section~\ref{sec:notation} for the general definition.}
\end{enumerate}
Assume that for all pairs $(\rho, \mc G)$ satisfying the conditions above,
$$dim\!\left(\langle \mc G_{[supp(\rho)]}\rangle\right)\geq r.$$
Here $\mc G_{[supp(\rho)]}$ is the matrix obtained from (the coefficient matrix of) $\mc G$
by retaining only the columns indexed by $supp(\rho)$, viewed as a set of row vectors;
its dimension $dim(\langle \mc G_{[supp(\rho)]}\rangle)$ is the rank of this submatrix.
Then $(A, b)$ is called \textbf{$(s, r)$-robust}.
\end{definition}

\begin{figure}[htbp]
\centering
\begin{tikzpicture}[font=\small, >=Stealth,
  cell/.style={draw, minimum size=6.5mm, inner sep=0pt},
  hcell/.style={draw, minimum size=6.5mm, inner sep=0pt, fill=orange!30},
  gcell/.style={draw, minimum size=6.5mm, inner sep=0pt, fill=gray!12},
  bcell/.style={draw, minimum size=6.5mm, inner sep=0pt, fill=blue!10}
]

\foreach \i/\lbl in {0/$x_1$, 1/$x_2$, 2/{$\cdots$}, 3/$x_{i_1}$,
                     4/{$\cdots$}, 5/$x_{i_s}$, 6/{$\cdots$}, 7/$x_n$}{
  \node[font=\scriptsize] at (\i*0.75 - 2.6, 2.1) {\lbl};
}
\node[font=\scriptsize] at (8*0.75 - 2.6 + 0.25, 2.1) {$b$};

\foreach \row in {0,1,2,3}{
  \foreach \col in {0,1,6,7}{
    \node[gcell] at (\col*0.75 - 2.6, 1.5 - \row*0.75) {};
  }
  \node[hcell] at (3*0.75 - 2.6, 1.5 - \row*0.75) {};
  \node[hcell] at (5*0.75 - 2.6, 1.5 - \row*0.75) {};
  \node[font=\scriptsize] at (2*0.75 - 2.6, 1.5 - \row*0.75) {$\cdots$};
  \node[font=\scriptsize] at (4*0.75 - 2.6, 1.5 - \row*0.75) {$\cdots$};
  \node[bcell] at (8*0.75 - 2.6 + 0.25, 1.5 - \row*0.75) {};
}
\node[font=\scriptsize] at (-2.6, 1.5 - 3.75*0.75) {$\vdots$};
\node[font=\scriptsize] at (8*0.75 - 2.6 + 0.25, 1.5 - 3.75*0.75) {$\vdots$};
\node[font=\scriptsize, left=2pt] at (-2.6 - 0.375, 1.5 - 1.5*0.75)
  {$\mathcal{G}\subseteq\langle\mathcal{F}\rangle$\;\;};

\draw[thick] (7*0.75 - 2.6 + 0.375 + 0.08, 1.5 + 0.375)
          -- (7*0.75 - 2.6 + 0.375 + 0.08, 1.5 - 3*0.75 - 0.375);

\draw[orange!70!black, decorate,
  decoration={brace, amplitude=5pt, mirror}]
  (3*0.75 - 2.6 - 0.35, 1.5 - 3*0.75 - 0.45)
  -- (5*0.75 - 2.6 + 0.35, 1.5 - 3*0.75 - 0.45)
  node[midway, below=5pt, align=center, orange!80!black]
    {$\mathcal{G}_{[\mathrm{supp}(\rho)]}$\quad
     $\mathrm{rank}\!\left(\mathcal{G}_{[\mathrm{supp}(\rho)]}\right)\!\geq r$};

\draw[orange!70!black, decorate,
  decoration={brace, amplitude=5pt}]
  (3*0.75 - 2.6 - 0.35, 2.4)
  -- (5*0.75 - 2.6 + 0.35, 2.4)
  node[midway, above=5pt, orange!80!black, font=\scriptsize]
    {$\mathrm{supp}(\rho)$, $|\mathrm{supp}(\rho)|=s$};

\draw[->, thick] (1.5, 0) -- (6.3, 0)
  node[midway, above right, font=\scriptsize] {\, \, apply $\rho$};

\node[draw, rounded corners=4pt, fill=red!8, align=center,
  minimum width=2.4cm, minimum height=1.4cm] at (7.5, 0)
  {$\mathcal{G}\rst_\rho$\\[3pt]0-1 unsat.};

\end{tikzpicture}
\caption{Illustration of $(s,r)$-robustness.
An instance $(A,b)$ (denoted as space $\mc F$) is $(s,r)$-robust if for every partial assignment $\rho$
with $|\mathrm{supp}(\rho)|=s$ (orange columns) and every subsystem
$\mathcal{G}\subseteq\langle\mathcal{F}\rangle$ whose restriction
$\mathcal{G}\rst_\rho$ is 0-1 unsatisfiable and depends on all variables
in $\mathrm{supp}(\rho)$, the rank of the submatrix
$\mathcal{G}_{[\mathrm{supp}(\rho)]}$ is at least $r$.
Intuitively: any $s$-variable witness of 0-1 infeasibility must involve at
least $r$ independent equations on those variables.}
\label{fig:robustness}
\end{figure}

\subsection{Emptyness of $\ECC^{n, k, d}_{\F_q}$}

Sometimes $\ECC^{n, k, d}_{\F_q}$ is empty even if there exist ECCs with parameters 
$(n, k, d)$. Characterization of when $\ECC^{n, k, d}_{\F_q}$ is empty is crucial for
the construction of robust instances in Section~\ref{sec:robust} and the tree-like lower
bound in Section~\ref{sec:tree-like}. 

Results of this section are summarized in the following theorem:

\begin{theorem}\label{thm:isemptyECC}
Let $q$ be a prime power. The following holds:
\begin{enumerate}
\item\label{itm:nonempECC}\textbf{Nonemptyness.} If there exists an ECC over $\F_q$
with parameters $(n, k, d)$, generator matrix $A$ and $n < (\log_2 q) k$ then there
exists $b \in \F_q^k$ such that $(A, b) \in \ECC^{n, k, d}_{\F_q}$.
\item\label{itm:empECC}\textbf{Emptyness.} If $d \geq (q \ln q) k^3$ then $\ECC^{n, k, d}_{\F_q}$ is empty.
\end{enumerate}
\end{theorem}

Theorem~\ref{thm:isemptyECC}.(\ref{itm:nonempECC}) is trivial. Let $A$ be a $k \times n$ generator matrix for the ECC from the statement. The image $A(\{0,1\}^n)$ of 0-1 points
under $A$ has size at most
$2^n < q^k = |\F_q^k|$. Therefore there exists $b \in \F_q^k$, $b \notin A(\{0,1\}^n)$
and thus $(A, b) \in \ECC^{n, k, d}_{\F_q}$.

Unfortunately, this simple counting argument is the only way to
construct 0-1 unsatisfiable instances we have up to now in case $d$ is not too
small ($d$ superlinear in $\mathrm{char}(\F_q)$).  Although we can pick explicit
$A$ for our lower bounds, we can only prove \emph{existence} of suitable $b$
without explicitly specifying it.  A natural concrete question arises for the
Reed-Solomon code, which has a particularly simple generator matrix (though $q$
is too large in that case to be directly relevant for our lower bounds):

\begin{problem}[\textbf{Open}]\label{problem:free}
Choose an arbitrary ordering on nonzero elements
$a_1, \ldots, a_{q - 1} \in \F_{q}^*$.  The Reed-Solomon code over $\F_{q}$ is
the linear code
$\mc C_{RS} := \{(p(a_1), \ldots, p(a_{q - 1}))\mid p\in \F_{q}[x],\ \deg(p)\leq k\}$.
Since a polynomial of degree at most $k$ has at most $k$ roots, the minimal
distance of $\mc C_{RS}$ is $d = q - k$.  Pick a concrete $k \times n$ generator
matrix $A_{RS}$ for $\mc C_{RS}$, for instance
$A_{RS}^{i,j} := \alpha^{(i-1)(j-1)}$ where $\alpha \in \F_q$ is the generator
of the multiplicative group $\F_q^*$.  What is a concrete $b\in \F_q^k$ such that
$b \notin A_{RS}(\{0,1\}^n)$, if it exists?
\end{problem}

The second part of the theorem partially characterizes 0-1 solvability of linear systems 
$A \cdot x = b$ based on ECCs: if the minimal distance $d_A$ is big enough compared
to the number of equations $k$ (that is if $d_A \geq (q \ln q) k^3$), then 
the system is necessarily 0-1 solvable. The rest of this section is devoted to Theorem~\ref{thm:isemptyECC}.(\ref{itm:empECC}) and its proof.

Consider the $k=1$ case where there is just one equation
$a_1 x_1 + \cdots + a_n x_n = b$ over $\F_q$.  Such an equation is always 0-1
satisfiable whenever the number of nonzero coefficients $d > q$.  It is natural
to ask whether something analogous holds for $k > 1$ linear equations:

\begin{questn}\label{quest:delta}
Denote $\Delta(k, q) \in \mathbb{N}\cup \{+\infty\}$ the minimal number such
that for every $k\times n$ matrix $A$ over $\F_q$ whose row code has minimal
distance $d_A \geq \Delta(k, q)$, the system $A\cdot x = b$ is 0-1 satisfiable
for all $b\in \F_q^k$.  How does $\Delta(k, q)$ grow?  Does
$\Delta(k, q) < +\infty$ hold?
\end{questn}

Part~(\ref{itm:empECC}) of Theorem~\ref{thm:isemptyECC} gives the upper bound
$\Delta(k, q) \leq ((q+1) \ln q)\, k^3$.  Its proof reduces
Question~\ref{quest:delta} to the following additive-combinatorics lemma.

The key to the proof is the following lemma almost completely resolving a natural
question in additive combinatorics: how large is $t_0(k, q)$ such that for any
family $\mc X = \{X_i\}$ of bases in $\F_q^k$ if $|\mc X| \geq t_0(k, q)$ then
$X_1+_M\ldots +_MX_{|\mc X|} = \F_q^k$ where $+_M$ is the Minkowski sum.

\begin{lemma}\label{addComb}
Let $X_1, \ldots, X_t \subset \F_q^k$ be $t$ bases of $\F_q^k$. If $t \geq ((q+1) \ln q) k^2$ then $X_1+_M\ldots+_MX_t = \F_q^k$.
\end{lemma}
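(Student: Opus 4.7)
The plan is an iterative density-increment argument. Write $S_0 := \{0\}$ and $S_k := A_1 + \cdots + A_k$; set $\alpha_k := |S_k|/p^m$. I adopt the natural interpretation (implicit in the ECC reduction motivating the lemma, since for $m=1$ the statement can only hold when $0 \in A_i$) that each $A_i$ contains $0$ alongside an $m$-element basis $\{a_i^{(1)},\dots,a_i^{(m)}\}$ of $\F_p^m$, so that $S \subseteq S + A_i$ at every step. The goal is to show $\alpha_t = 1$ once $t \geq C_{E,p}\, m^2$.

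For $v \in \F_p^m$ let $\partial_v S := (S + v) \setminus S$. A direct inclusion gives the subadditivity $|\partial_{v + w} S| \leq |\partial_v S| + |\partial_w S|$, from $(S + v + w) \setminus S \subseteq ((S + v + w) \setminus (S + w)) \cup ((S + w) \setminus S)$. Since the $a_i^{(j)}$ span $\F_p^m$, every $v \in \F_p^m$ can be written as $\sum_j k_j\, a_i^{(j)}$ with $k_j \in \{0,\dots,p-1\}$; iterating subadditivity gives $|\partial_v S| \leq \sum_j k_j |\partial_{a_i^{(j)}} S|$. Summing over all $v \in \F_p^m$ and combining with the double-counting identity $\sum_v |\partial_v S| = |S|(p^m - |S|)$ produces
\[
\sum_{j=1}^m |\partial_{a_i^{(j)}} S| \;\geq\; \frac{2\,|S|\,(p^m - |S|)}{p^m\,(p-1)}.
\]
The bound $|S + A_i| \geq |S| + \max_j |\partial_{a_i^{(j)}} S| \geq |S| + \tfrac{1}{m}\sum_j |\partial_{a_i^{(j)}} S|$ then yields the logistic recursion
\[
\alpha_{k+1} \;\geq\; \alpha_k + \frac{2\,\alpha_k(1 - \alpha_k)}{m(p-1)}.
\]

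To extract the constant $C_{E,p} = \log p/\log(1 + 1/p)$, I analyze this recursion via the potential $\Phi_k := \log\tfrac{\alpha_k}{1 - \alpha_k}$; a short manipulation shows $\Phi_{k+1} - \Phi_k \geq \log\!\bigl(1 + 2/(m(p-1))\bigr) \geq \log(1 + 1/p)/m$. Starting from $\Phi_0 = -\log(p^m - 1) \approx -m \log p$ and iterating until $\Phi_t \geq \log(p^m - 1)$ (i.e.\ $\alpha_t \geq 1 - 1/p^m$) forces $t = O(C_{E,p}\, m^2)$. The endgame $\alpha_t = 1 - 1/p^m \to \alpha_{t+1} = 1$ is automatic: the unique missing point $y_0$ equals $(y_0 - a) + a$ with $y_0 - a \in S_t$ for any nonzero $a \in A_{t+1}$, and such an $a$ exists because $A_{t+1}$ contains a basis.

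The main obstacle will be matching the precise constant $C_{E,p}$ rather than just an $O(C_{E,p})$ bound. The coarse bound $|S + A| \geq |S| + \max_j |\partial_{a_j} S|$ loses a factor of $m$ versus $|S| + |\bigcup_j \partial_{a_j} S|$, and the potential argument accumulates a further small slack from $\log(1+c) \leq c$; closing these gaps is likely to require an inclusion–exclusion estimate controlling the pairwise overlaps $\partial_{a_j} S \cap \partial_{a_k} S$, or a tighter choice of potential that exploits the logistic profile directly. Apart from this constant-tightening, the argument is routine.
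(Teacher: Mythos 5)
Your density-increment argument is sound and is a genuinely different route from the paper's. The paper proves a local ``block'' claim: within any $C_{E,p}\cdot m$ consecutive bases the running sum $B_i$ must acquire a \emph{full line} $\{a+\alpha v\}$ in a direction $v$ taken from the next basis outside the span of previously collected directions, since otherwise $|\partial_v B_i|\geq |B_i|/p$ and the sum grows by a factor $1+\tfrac1p$ per step, which cannot persist for $C_{E,p}m$ steps inside $\F_p^m$; splitting the $t$ bases into $m$ blocks then yields $m$ lines in linearly independent directions whose Minkowski sum is all of $\F_p^m$. You instead run one global increment: averaging the boundary over the basis directions of the current $A_i$ (via subadditivity of $v\mapsto|\partial_v S|$ and the identity $\sum_v|\partial_v S|=|S|(p^m-|S|)$) gives the logistic recursion, and the potential $\log\frac{\alpha}{1-\alpha}$ turns it into an additive gain per step; your individual steps (subadditivity, double counting, the ratio bound $\frac{1+c-c\alpha}{1-c\alpha}\geq 1+c$, the one-missing-point endgame) all check out. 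What the paper's route buys is the stated constant essentially for free, since each block only has to traverse the size range $[1,p^m]$ once at rate $1+\tfrac1p$; what yours buys is the elimination of all the lines-and-independent-directions bookkeeping, at the cost of tracking density across the whole range twice (from $\alpha_0\approx p^{-m}$ up to $1-p^{-m}$).

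Two caveats. First, your decision to adjoin $0$ to each $A_i$ (equivalently, to read the sum as over sub-selections) is the right reading: under the literal one-element-per-set Minkowski sum the statement is false (take every $A_i$ to be the standard basis; then every element of the sum has coordinate sum $\equiv t \pmod p$), the paper's own growth step $|B_i+A_{i+1}|\geq|B_i|+|\partial_v B_i|$ tacitly uses $B_i\subseteq B_i+A_{i+1}$, i.e.\ exactly this convention, and the application to $M(\{0,1\}^n)$ in Theorem~\ref{eccSat} only needs subset sums, so this is a correction rather than a weakening. Second, as written you only obtain the lemma with constant roughly $2C_{E,p}$, and you flag matching $C_{E,p}$ as an obstacle; but the factor of two comes solely from rounding the per-step gain down to $\log(1+1/p)/m$. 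Keeping the true gain $\log\bigl(1+\tfrac{2}{m(p-1)}\bigr)\geq\tfrac{2}{m(p-1)}\bigl(1-\tfrac{1}{m(p-1)}\bigr)$ and using $1/\log(1+1/p)\geq p+\tfrac12-\tfrac1{3p}$ bounds the number of required steps by roughly $(p-1)m^2\log p+m\log p$, which is below $C_{E,p}m^2$, so no control of overlapping boundaries is needed; and in any case the paper's downstream results only use that the constant depends on $p$.
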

We first prove Theorem~\ref{thm:isemptyECC}.(\ref{itm:empECC}) using Lemma~\ref{addComb}
and after that proceed to the proof of Lemma~\ref{addComb}.

\begin{proof}(of Theorem~\ref{thm:isemptyECC}.(\ref{itm:empECC}))
Consider a system $A \cdot x = b$ such that $d_A \geq (q \ln q) k^3$. Since $d_A > 0$,
$A$ must have full row rank (rank $= k$). Let $X_1$ be an $k$-element set of linearly independent columns in $A$. Denote $A_1$ the matrix obtained by removing
$X_1$ from $A$. For $A_1$ holds $d_{A_1} \geq (q \ln q) k^3 - k$.

We can successively apply the procedure above at least $(q \ln q)k^2$ times and obtain a sequence $X_1, \ldots, X_t$, $t \geq (q \ln q)k^2$ of disjoint $k$-element subsets of 
linearly independent columns of $A$. By Lemma~\ref{addComb} $X_1+_M\ldots+_MX_t = \F_q^k$, therefore $A(\{0, 1\}^n)=\F_q^k$ and thus $A\cdot x = b$ must be 0-1 satisfiable.
\end{proof}

\begin{proof}(of Lemma~\ref{addComb})
We first prove the following claim:
\begin{claim}\label{blockClaim}
Let $S$ be a set of vectors in $\F_q^k$ such that $|S| < k$. Let $X_1, \ldots, X_t \subset \F_q^k$ be $t$ bases of $\F_q^k$. If $t \geq (q + 1) (\ln q)k$ then
there exists $v\in \F_q^k$ linearly independent from vectors in $S$ and some $a\in \F_q^k$ such that $\{a + \alpha v\, |\, \alpha \in \F_q\} \subseteq X_1+_M\ldots+_MX_t$.
\end{claim}
\begin{proof}
Let $Y_i = X_1 +_M \dots +_M X_i$, $i < t$. Pick some $v \in X_{i + 1}$ such that $v\notin \langle S\rangle$ and assume there does not exist $a\in \F_q^k$ 
such that $\{a + \alpha v\, |\, \alpha \in \F_q\} \subseteq Y_i$. Then consider the border $\partial_v Y_i := \{b\, |\, b\in Y_i, b + v\notin Y_i\}$ of $Y_i$ 
with respect to $v$. Since no complete line parallel to $v$ is in $Y_i$ for every $b\in Y_i$ can be mapped to $\partial_v Y_i$ by a shift $b + \alpha_b v \in \partial_v Y_i$ and there 
are at most $q - 1$ of elements in $Y_i$ that are mapped to the same vector in $\partial_v Y_i$. Therefore $|\partial_v Y_i| \geq \frac{1}{q}\cdot |Y_i|$. Now note that 
$|Y_i +_M X_{i+1}| \geq |Y_i| + |\partial_v Y_i|\geq (1 + \frac{1}{q})\cdot |Y_i|$ where
the first inequality holds because $\partial_v Y_i\subseteq Y_i$ shifted by
$v \in X_{i+1}$ does not intersect $Y_i$. 

If our assumption on nonexistence of certain lines in $Y_i$ were true for all steps up to $i$, then $|Y_i|\geq (1 + \frac{1}{q})^i$. Therefore $(1 + \frac{1}{q})^i < q^k$ and
thus $i < (q + 1)(\ln q)k$. Since $t \geq (q + 1)(\ln q)k$ a line like in the statement of the claim must exist in $Y_t$.
\end{proof}

We split the sequence $X_1, \ldots, X_t$ into $k$ blocks $X^{(i)}_1, \ldots, X^{(i)}_{t_i}$ of size $\geq (q + 1)(\ln q)k$. We define $S_i := \{v_1, \ldots, v_i\}$ inductively.
Let $S_0$ be empty. For $i \geq 1$ by the Claim~\ref{blockClaim} there exists $v_i$ linearly independent from vectors in $S_{i - 1}$ and some $a_i\in \F_q^k$ such that 
$\{a_i + \alpha v_i\, |\, \alpha \in \F_q\} \subseteq X^{(i)}_1+_M\ldots+_MX^{(i)}_{t_i}$. Since vectors in $S_m$ are linearly independent, the sum of these lines 
gives $\F_q^k$. On the other hand, the sum of lines is in $X_1+_M\ldots+_MX_t$.

\end{proof}

From Theorem~\ref{thm:isemptyECC}.(\ref{itm:nonempECC}) it follows that if
$q > 2$ then $\Delta(k, q) = \Omega(k \log q)$ (by the Gilbert-Varshamov bound,
for instance).  We currently do not have better lower bounds on $\Delta(k, q)$.

\begin{problem}[\textbf{Open}]
Narrow down the interval $[\Omega(k \log q),\, ((q+1) \ln q)\, k^3]$ for
$\Delta(k, q)$.
\end{problem}

\subsection{Construction of robust instances}\label{sec:robust} 

In this section we use results of the previous section to prove existence of robust instances.

The construction of $(s,r)$-robust instances in Theorem~\ref{thm:robust} below
relies on the distance condition $d > 2n/3$ together with Theorem~\ref{thm:isemptyECC}.(\ref{itm:empECC}).  The idea is as follows.  Set $s := n/3$.  Since $n - s < d$, every $s$ columns of $A$ span a full-rank submatrix (rank equals the number of columns); for any partial assignment $\rho$ with $|\mathrm{supp}(\rho)|=s$ and any subsystem $\mc G \subseteq \langle\mc F\rangle$ the submatrix $\mc G_{[\mathrm{supp}(\rho)]}$ inherits full rank, and Theorem~\ref{thm:isemptyECC}.(\ref{itm:empECC}) then forces $\dim(\langle\mc G_{[\mathrm{supp}(\rho)]}\rangle)\geq r$.  The nature of the problem changes for smaller $s$, say $s \leq k/2$: the submatrix $A_{[\mathrm{supp}(\rho)]}$ no longer has full rank, $\mc G_{[\mathrm{supp}(\rho)]}$ can have arbitrarily small rank, and the choice of $b$ becomes important (cf.\ Problem~\ref{problem:free}).

\begin{problem}[\textbf{Open}]
Do $(s, \omega(\log n))$-robust instances exist in $\ECC^{n,k,d}_{\F_q}$ for
$s \leq k / 2$?
\end{problem}

\begin{theorem}\label{thm:robust}
If a 0-1 unsatisfiable instance $A\cdot x = b$ is such that $A$ is a $k\times n$ 
generator matrix for 
a code over $\F_q$ with distance $d$ satisfying $d > 2n/3$ then it is 
$(n / 3, \Omega\left((n/(q + 1)\ln q)^{1/3}\right))$-robust.
\end{theorem}
\begin{proof}
Let $s := n / 3$. Since $n - s < d$, any $s$ columns of $A$ form a $k\times s$
submatrix of full column rank (rank $= s$; since otherwise there exists
$x\in \mc C\setminus 0$ with $\omega(x) < d$). Let $\rho$ be a partial assignment with the support $I := supp(\rho)$ such
that $|I| = s$. Since the $k\times s$ submatrix $A_I$ of $A$ formed by the columns
indexed by $I$ has full column rank (rank $= s$), for any $\mc G \subseteq \langle \mc F\rangle$
holds 
$dim(\langle \mc G_{[I]}\rangle) = dim(\langle \mc G\rangle) \geq dim(\langle (\mc G\rst_{\rho})_{[1]}\rangle)$.
By Theorem~\ref{thm:isemptyECC}.(\ref{itm:empECC}), since 
$\omega(\langle (\mc G\rst_{\rho})_{[1]}\rangle) \geq d - s \geq n / 3$,
if $\mc G\rst_{\rho}$ is 0-1 unsatisfiable then 
$dim(\langle (\mc G\rst_{\rho})_{[1]}\rangle) \geq (n / (3 q \ln q))^{1/3} =: r$ and thus
$dim(\langle (\mc G_{[I]})\rangle) \geq r$. It follows that $(A, b)$ is 
$(s, r)$-robust.
\end{proof}

Theorem~\ref{thm:robust} together with Corollary~\ref{cor:rand_code_dist} and
Proposition~\ref{prop:hermit_code_dist} imply the following.

\begin{corollary}\label{cor:rand_robust}
If $A$ is uniformly random $k \times n$ matrix over $\F_q$ where 
$n / \log q < k < n / 9$ 
then there exists $b\in \F_q^k$ such that $A\cdot x = b$ is 0-1 unsatisfiable and
with high probability $A\cdot x = b$ is 
$(n / 3, \Omega\left((n/(q + 1)\ln q)^{1/3}\right))$-robust.
\end{corollary}

\begin{corollary}\label{cor:hermit_robust}
Let $A$ be a generator matrix of the Hermitian code over $\F_q$
for $q = p^2$ for a prime $p$, with parameters $n = p^3$, $k = \lfloor p^3/\log_2 p - p^2 / 2 \rfloor$,
$d \geq (1 - 1/\log p) n$. Then there exists $b\in \F_q^k$ such that $A\cdot x = b$
is 0-1 unsatisfiable $(n / 3, \Omega\left((n/(q + 1)\ln q)^{1/3}\right))$-robust
instance.
\end{corollary}

\section{Lower bounds for $\reslin{\F_q}$ refutations}

\subsection{Linear splitting tree refutations}\label{sec:tree-like}

We begin with a high-level overview of the Delayer's strategy and sketch the argument why it guarantees
a good number of branching points. We postpone the details to the Section~\ref{sec:TLproofs}.

\begin{figure}[htbp]
\centering
\begin{tikzpicture}[
  rnd/.style={circle, draw, minimum size=7mm, inner sep=1pt, font=\small},
  brn/.style={rectangle, rounded corners=3pt, draw, thick, fill=orange!20,
              minimum width=8mm, minimum height=8mm, inner sep=2pt, font=\small},
  lf/.style={draw=none, font=\small},
  hl/.style={red!70!black, very thick},  
  >=Stealth
]
\node[rnd]        (r)   at ( 0,   0  ) {};      
\node[brn]        (b1)  at ( 0,  -1.6) {$\square$}; 
\node[brn]        (b2l) at (-2.2,-3.2) {$\square$}; 
\node[rnd]        (nb)  at ( 2.2,-3.2) {};      
\node[brn]        (b2r) at ( 2.2,-4.8) {$\square$}; 
\node[lf]         (l1)  at (-3.2,-4.8) {$\bot$};
\node[lf]         (l2)  at (-1.2,-4.8) {$\bot$};
\node[lf]         (l3)  at ( 1.2,-6.1) {$\bot$};
\node[lf]         (l4)  at ( 3.2,-6.1) {$\bot$};

\draw[hl] (r) -- (b1);
\draw[hl] (b1) -- node[left,font=\scriptsize,red!60!black]{$f{=}a_1$} (b2l);
\draw[hl] (b2l) -- (l1);

\draw (b1) -- node[right,font=\scriptsize]{$f{=}a_2$} (nb);
\draw (b2l) -- (l2);
\draw (nb)  -- (b2r);
\draw (b2r) -- (l3);
\draw (b2r) -- (l4);

\node[draw=none, font=\scriptsize, orange!70!black, right=3pt] at (b1.east)
  {\#1};
\node[draw=none, font=\scriptsize, orange!70!black, left=3pt] at (b2l.west)
  {\#2};
\node[draw=none, font=\scriptsize, orange!70!black, right=3pt] at (b2r.east)
  {\#2};


\node[rnd, label=right:{\scriptsize non-branching}] at (-4.5,-5.4) {};
\node[brn, label=right:{\scriptsize branching ($\square$)}] at (-4.5,-6.2) {$\square$};

\node[draw=none, font=\scriptsize, align=center] at (0.6,-6.9)
  {Every path has $\geq 2$ branching points
   $\;\Rightarrow\;$ size $\geq 2^2 = 4$};
\end{tikzpicture}
\caption{A $\LinTrees_{\F_q}$ refutation subtree.  Circles are non-branching positions
where Delayer forces the choice of the next node;
shaded squares~($\square$, labelled \#1, \#2) are \emph{branching points}
declared by Delayer where Prover can choose one of two nodes as the next node; $\bot$~marks contradiction leaves.  The size lower
bound $2^h$ follows from the \emph{minimum} number of branching points along
any root-to-leaf path (highlighted in red: 2 branching points on this path).
Every such path sees at least $h$ branching points, forcing a complete binary
tree of depth~$h$ to embed.}
\label{fig:game-tree}
\end{figure}

Assume $A\cdot x = b$ is the starting system.
For brevity in this overview, at each node $u$ of the game tree we write:
\begin{itemize}[nosep]
\item $E_u$ for all equations added during the game so far,
\item $S_u$ for the system of equations of weight $\leq \tau:=\Theta(d_A^{4/5})$ in
      $\mathrm{span}(A\cdot x{=}b,\,E_u)$ (the ``short'' subsystem at $u$),
\item $E^\bullet_u\subseteq E_u$ for the branching equations only, and
\item $S^\bullet_u$ for the system of weight-$\leq\tau$ equations in
      $\mathrm{span}(A\cdot x{=}b,\,E^\bullet_u)$.
\end{itemize}

\noindent\textbf{Core idea.}
Keep $S_u$ 0-1 satisfiable for as long as possible.
Once $S_u$ is 0-1 unsatisfiable and depends on few variables, a short
refutation of $S_u$---and hence of $A\cdot x=b$---exists.

\noindent\textbf{Strategy (first attempt).}
When Prover picks a form $f$, let $f'$ be its minimal-weight representative
modulo $\mathrm{span}(A\cdot x{=}b, E_u)$ (so $f'\equiv\alpha f+h$ for some
$h\in\mathrm{span}(A\cdot x{=}b,E_u)$, $\alpha\neq 0$, and adding $f'=c$ is equivalent
to adding $f=\tilde{c}$ for a corresponding $\tilde{c}$):
\begin{enumerate}
\item\label{nonbranchX}\textbf{Non-branching.}
  If $S_u\models_{0,1}f'=c$ for some $c\in\F_q$, add $f=\tilde{c}$ and continue.
\item\label{branchX}\textbf{Branching.}
  Otherwise pick $c_1\neq c_2$ such that $S_u\wedge f'=c_i$ is 0-1 satisfiable
  for $i=1,2$; declare a \emph{branching point} and let Prover choose
  $f=\tilde{c}_i$. Call this a \emph{branching equation}.
\end{enumerate}

\begin{figure}[htbp]
\centering
\scalebox{0.80}{%
\begin{tikzpicture}[
  box/.style={draw, rectangle, rounded corners=3pt, minimum width=5.8cm,
              minimum height=9mm, align=center, font=\small},
  dec/.style={draw, diamond, aspect=3.2, minimum width=5.8cm,
              minimum height=9mm, align=center, font=\small},
  >=Stealth, node distance=1.1cm
]
\node[box] (start)
  {Prover picks $f$;\quad compute min-weight $f'\equiv f \bmod \mathrm{span}(A\cdot x{=}b,E_u)$};
\node[dec, below=of start] (q)
  {$S_u\;\models_{0,1}\;f'{=}c\;$?};
\node[box, below left=1.3cm and 2.2cm of q, fill=green!8] (nb)
  {\textbf{Non-branching.}\\[2pt]
   Add $f = \tilde{c}$; no branching point.};
\node[box, below right=1.3cm and 1.2cm of q, fill=orange!15] (br)
  {\textbf{Branching point.}\\[2pt]
   Pick $c_1\!\neq\!c_2$ with $S_u\wedge f'{=}c_i$ sat.\\
   Prover adds $f = \tilde{c}_i$.};
\draw[->] (start) -- (q);
\draw[->] (q) -- node[above left, font=\scriptsize]{yes} (nb);
\draw[->] (q) -- node[above right,font=\scriptsize]{no}  (br);
\end{tikzpicture}}
\caption{Delayer's strategy at each round.  The min-weight representative
$f'$ of $f$ modulo ${\mathrm{span}(A\cdot x{=}b,E_u)}$ is used to decide the case.
A branching point occurs exactly when neither value of $f'$ is 0-1 forced
by the short system~$S_u$.}
\label{fig:strategy}
\end{figure}

\noindent\textbf{The problem.}
Adding $f=\tilde{c}$ can pull several new short equations into $S_u$
(Figure~\ref{fig:span-decomp}), potentially making $S_{u'}$
0-1 unsatisfiable at the child node $u'$.
Let $H$ denote the new short equations entering $S_{u'}$.
Although $\mathrm{span}(S_u,H)$ can have small minimal distance,
we rescue 0-1 satisfiability via the following key claim.
Intuitively, it finds a partial sub-assignment $\rho\subset\rho_0$ that
zeroes out all new short equations, leaving a restricted system with
large minimal distance and no dimension gain. See Figure~\ref{fig:span-decomp}

\begin{figure}[htbp]
\centering
\scalebox{0.80}{%
\begin{tikzpicture}[font=\small, >=Stealth]

\draw[fill=blue!5, draw=blue!40]
  (0,0) ellipse (5.4cm and 2.4cm);
\node[blue!50!black] at (0, 1.6)
  {$\mathrm{span}(A\!\cdot\!x{=}b,\;E_u)$\quad\textit{(all reachable equations)}};

\draw[fill=blue!16, draw=blue!55]
  (-1.4, -0.15) ellipse (2.0cm and 1.4cm);
\node[blue!65!black, align=center] at (-1.8, 0.15)
  {$\mathrm{span}(A\!\cdot\!x{=}b)$};

\draw[fill=orange!28, draw=orange!75]
  (2.5, -0.3) ellipse (1.8cm and 0.95cm);
\node[align=center] at (2.2, -0.3)
  {$S_u$\\[-2pt]{\scriptsize weight~$\leq\tau$, 0-1 sat.}};
  
\draw[fill=red!28, draw=orange!75]
  (4.0, -0.3) ellipse (0.1cm and 0.1cm);
\draw[fill=red!28, draw=orange!75]
  (3.8, -0.5) ellipse (0.1cm and 0.1cm);  

\node[circle, draw=red!70, fill=red!25, minimum size=6mm, inner sep=1pt]
  (neq) at (5.8, 2.1) {};
\node[above right=1pt of neq, font=\scriptsize] {$f{=}\tilde{c}$\;\textit{(new)}};

\draw[->, red!70!black, thick, dashed]
  (neq) to[bend left=20]
  node[right, font=\scriptsize, red!60!black, align=left]
    {\, can pull several\\[-2pt]\, new short\\[-2pt]\, equations}
  (4.2, -0.25);

\end{tikzpicture}}
\caption{Adding a new equation $f{=}\tilde{c}$ (red) expands the span and
can import new short equations into $S_u$, potentially destroying its
0-1 satisfiability.  Note that $S_u$ lives in
$\mathrm{span}(A\cdot x{=}b,E_u)$ but outside of
$\mathrm{span}(A\!\cdot\!x{=}b)$: short equations can only arise from
combining game equations $E_u$ with the original system.
Claim~\ref{clm:narrowSat} finds $\rho\subset\rho_0$ that zeroes out
all $\leq\tau_0$-short equations (condition~1), ensures large minimal distance
(condition~2), and bounds the dimension (condition~3), so that
Theorem~\ref{thm:isemptyECC} applies.}
\label{fig:span-decomp}
\end{figure}

\begin{claim}\label{clm:narrowSat}
Let 0-1 assignment $\rho_0$ satisfy $S_u\wedge f'=c$, and set $\tau_0:=\tau^\delta$ for
some fixed $\delta<1$.
There exists $\rho\subset\rho_0$ such that:
\begin{enumerate}
\item All equations of weight $\leq\tau_0$ in $\mathrm{span}(S_u,H)$
  are satisfied by $\rho$.
  \quad{\normalfont\itshape(Short equations are zeroed out.)}
\item Remaining equations of $(S_u\wedge H)\rst_\rho$ have weight $\geq\tau_0$.
  \quad{\normalfont\itshape(Restricted system has large minimal distance.)}
\item $\dim\bigl((S_u\wedge H)\rst_\rho\bigr)\leq\dim(E_{u'})$.
  \quad{\normalfont\itshape(No dimension gain.)}
\end{enumerate}
\end{claim}

Conditions~(2)--(3) let us apply Theorem~\ref{thm:isemptyECC}.(\ref{itm:empECC}):
if $\dim(E_{u'})\leq(\tau_0/(q{+}1)\ln q)^{1/3}$ then
$(S_u\wedge H)\rst_\rho$ is 0-1 satisfiable, hence $S_{u'}$ is 0-1 satisfiable.

We can therefore conclude that at every endgame position
$\dim(E_u)\geq(\tau_0/(q{+}1)\ln q)^{1/3}$.
However, $\dim(E_u)$ is not bounded by the number of branching points,
since non-branching equations also raise the dimension.

\medskip
\noindent\textbf{Modified strategy.}
Replace $S_u$ with $S^\bullet_u$ throughout: track only the short equations
derivable from $A\cdot x{=}b$ and the branching equations $E^\bullet_u$.
Then $\dim(E^\bullet_u)$ equals the number of branching points so far,
which is exactly what we want to bound.
Every non-branching equation in $E_u\setminus E^\bullet_u$ is 0-1 implied
by $S^\bullet_u$ (but need not be linearly dependent on it over $\F_q$).

\begin{figure}[htbp]
\centering
\begin{tikzpicture}[font=\small, >=Stealth,
  box/.style={draw, rounded corners=3pt, minimum height=9mm,
              align=center, inner sep=5pt}
]

\begin{scope}[xshift=-3.8cm]
  \draw[draw=blue!50, fill=blue!5, rounded corners=5pt]
    (-2.7,-0.6) rectangle (3.0,2.7);
  \node[blue!60!black] at (0,2.1)
    {$E_u$\quad\textit{(all added game equations)}};

  \draw[draw=orange!70, fill=orange!15, rounded corners=3pt]
    (-2.5,-0.3) rectangle (0.4,1.5);
  \node[align=center, orange!70!black] at (-1.05, 0.6)
    {$E^\bullet_u$\\[2pt]
     {\scriptsize branching eqs only}\\[1pt]
     {\scriptsize $\dim E^\bullet_u = \#\text{br.\ points}$}};

  \draw[draw=green!60!black, fill=green!8, rounded corners=3pt]
    (0.5,-0.3) rectangle (2.8,1.5);
  \node[align=center, green!50!black] at (1.6,0.6)
    {non-branching\\[2pt]
     {\scriptsize $h = a$}\\[1pt]
     {\scriptsize $S^\bullet_u\!\models_{0,1}\!h{=}a$}};
\end{scope}

\draw[->, thick] (-0.05,0.8) -- (-0.05,-0.5)
  node[midway, right, font=\scriptsize, align=left]
    {weight~$\leq\tau$\\[-1pt]closure \\ with Ax=b};

\begin{scope}[xshift=3.6cm]
  \draw[draw=blue!50, fill=blue!8, dashed, rounded corners=4pt]
    (-1.5,-1.5) rectangle (3.8,0.2);
  \node[blue!60!black, font=\scriptsize] at (1.2,-1.25)
    {$S_u$\quad (all short equations, weight~$\leq\tau$)};

  \draw[draw=orange!70, fill=orange!20, rounded corners=3pt]
    (-1.0,-1.0) rectangle (1.9,-0.15);
  \node[orange!70!black, align=center] at (0,-0.58)
    {$S^\bullet_u$};


\end{scope}

\draw[->, dashed, gray]
  (-0.7,-0.55) to[out=-30, in=150] (1.9,-1.0);

\node[draw, rounded corners=3pt, fill=gray!8, align=center,
      minimum width=10cm, font=\scriptsize]
  at (-0.05,-2.3)
  {\textbf{Need to show (while $\dim E^\bullet_u < s$, $\tau'=\Omega(\tau)$):}\\
   (1)~$\tau'$-weight equations in $\mathrm{span}(A\cdot x{=}b,E_u)$ are 0-1 satisfiable
    \\ (2)~$S^\bullet_u$ stays 0-1 satisfiable after each branch};
\end{tikzpicture}
\caption{Structure of the modified strategy.  \emph{Left:} $E_u$ splits into
branching equations $E^\bullet_u$ (orange; $\dim E^\bullet_u =$ branching count,
directly controllable) and non-branching equations (green; each is 0-1 implied
by $S^\bullet_u$).
\emph{Right:} $S^\bullet_u \subseteq S_u$ after taking short-weight closures.
The strategy tracks only $S^\bullet_u$ and must establish
properties~\ref{prop1}--\ref{prop2} to guarantee the Delayer is never
prematurely stuck.}
\label{fig:modified-strategy}
\end{figure}

It is easy to see that $S^\bullet_u\models_{0,1} h=a$ for every non-branching
equation $h=a$ in $E_u$.
We need to show that while $\dim(E^\bullet_u)<s:=\Theta\bigl(((q{+}1)\ln q)^{-1/3}d_A^{0.2}\bigr)$:
\begin{enumerate}
\item\label{prop1} For some $\tau' = \Omega(\tau)$, $\tau'$-weight equations in
 $\mathrm{span}(S_u)$ are 0-1 satisfiable. \textit{(Prevents premature endgame; the
  modified strategy only directly guarantees $S^\bullet_u$ is 0-1 satisfiable.)}
\item\label{prop2} After adding a branching equation $f=\tilde{c}$, the updated
  $S^\bullet_{u'}$ is 0-1 satisfiable.
\end{enumerate}

Property~\ref{prop1} ensures Delayer is never forced into a premature endgame
position, and property~\ref{prop2} ensures each branching step preserves
0-1 satisfiability of the tracked system. See Figure~\ref{fig:modified-strategy}.

\medskip
\noindent\textbf{Proof of property~\ref{prop2}.}
Apply Claim~\ref{clm:narrowSat} and the conclusion just after it with $S^\bullet_u$
in place of $S_u$.

\medskip
\noindent\textbf{Proof of property~\ref{prop1}.}
The equations in $E_u\setminus E^\bullet_u$ are 0-1 entailed by $S^\bullet_u$
but may not be $\F_q$-entailed (i.e.\ not in the linear span of $S^\bullet_u$).

\emph{Step 1: find a short partial assignment.}
By the same argument as in Claim~\ref{clm:narrowSat}, there exists $\rho$
satisfying all weight-$<\tau_0$ equations in $\mathrm{span}(S^\bullet_u)$ (and we can show that in fact $|\mathrm{supp}(\rho)|\leq 0.5\tau$).

\emph{Step 2: 0-1 entailment becomes linear dependence.}
After applying $\rho$, the restricted system $S^\bullet_u\rst_\rho$ has large
minimal distance. Theorem~\ref{implClaim} then implies that 0-1 entailment and
linear dependence coincide for $S^\bullet_u\rst_\rho$:

\begin{theorem}\label{implClaim}
For every system $A\cdot x=b$ with $k$ equations and
$d_A\geq 6((q{+}1)\ln q)k^3$, and every equation $f=a$,
if $A\cdot x=b\bmodels f=a$ then $f=a$ lies in the linear span of $A\cdot x=b$.
\end{theorem}

\emph{Step 3: conclude 0-1 satisfiability.}
Each non-branching equation $h=a$ in $E_u$ satisfies
$S^\bullet_u\models_{0,1}h=a$, so $(h=a)\rst_\rho$ lies in
$\mathrm{span}(S^\bullet_u\rst_\rho)$ by Theorem~\ref{implClaim}.
It is not hard to see that this implies that $\leq 0.5\tau$-weight equations in 
$\mathrm{span}(S_u\rst_\rho)$ lie in $\mathrm{span}(S^\bullet_u\rst_\rho)$ (detailed
argument is in the next section).
Since $\mathrm{span}(S^\bullet_u\rst_\rho)$ is
0-1 satisfiable this $\leq 0.5\tau$-weight fragment of $\mathrm{span}(S_u)$ is also
0-1 satisfiable.

This finishes the argument showing $\Omega(((q{+}1)\ln q)^{-1/3}d_A^{1/5})$ lower bound for the number of
branching points.

\subsubsection{Lower bound}\label{sec:TLproofs}

In section we give a detailed proof of the lower bound. Since our arguments are rather
technical, we switch to the notation outlined in Section~\ref{sec:notation} for convenience.

We first derive Theorem~\ref{implClaim} from 
Theorem~\ref{thm:isemptyECC}.(\ref{itm:empECC}).

\begin{proof}(of Theorem~\ref{implClaim})
Let $P$ be a vector space of linear equations and let $k := dim(P)$. Fix some $h=a$. Denote $\tau := 6((q+1)\ln q) k^3$ and assume $\omega(P)\geq \tau$ and 
$P\bmodels h = a$. Recall from Section~\ref{sec:notation} that $red_P(h = a)$ is
some equation $h' = a'$ of minimal weight in 
$\{\alpha h + f\, |\, \alpha\neq 0, f \in \spanv{P}\}$. Consider two cases:
\begin{itemize}
\item $\omega(red_P(h = a)) \geq 0.5\cdot \tau$

By Theorem~\ref{thm:isemptyECC}.(\ref{itm:empECC}) it follows that $P+\langle h = b \rangle$ is 0-1 satisfiable for all $b$, therefore $P\not\bmodels h = a$ which
is a contradiction.

\item $\omega(red_P(h = a)) < 0.5\cdot \tau$

If $red_P(h = a) \equiv (h' = a')\not\equiv (0 = 0)$, we can choose a 0-1 assignment $\rho$ for $<0.5\cdot\tau$ variables such that $h'\rst_{\rho}\equiv b$
for some $b\neq a'$. By Theorem~\ref{thm:isemptyECC}.(\ref{itm:empECC}) it follows that
$(P+\spanv{ h = a - a' + b})\rst_{\rho}$ is 0-1 satisfiable and therefore $P\not\bmodels h = a$ which is a 
contradiction. 

It thus must be that $red_P(h = a)\equiv (0 = 0)$.
\end{itemize}
\end{proof}

We now prove the main theorem for tree-like lower bounds.

\begin{theorem}\label{secondMain}
For all $(A, b) \in \ECC^{n, k, d}_{\F_q}$ there exists Delayer's strategy on $A\cdot x = b$ that guarantees $\Omega({((q+1)\ln q)^{-1/3}}d^{1/5})$ branching points.
\end{theorem}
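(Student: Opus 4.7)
The plan is to implement the LinTrees$^{(p)}$ Prover-Delayer strategy sketched in the introduction, with parameters $s = \Theta(d^{0.2})$, the target number of branching points, and $\tau = \Theta(d^{0.8})$, the weight threshold defining ``short'' equations. The pair $(0.2, 0.8)$ is calibrated so that $\tau \gg C_{I,p}\cdot s^3$ (which is what Corollary~\ref{implClaim} will require on restrictions of an at-most-$s$-dimensional short-weight subspace), while the support budget needed for these restrictions stays comfortably below $\tau < d$.

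Delayer maintains the set $B_u$ of added equations, the subset $B_u^{(br)} \subseteq B_u$ of branching additions, and the short-weight systems $C_{u,\le \tau}\cdot x = d_{u,\le \tau}$ and $C^{(br)}_{u,\le \tau}\cdot x = d^{(br)}_{u,\le \tau}$ introduced in the introduction. On Prover's linear form $f$, Delayer computes $f' = red_{\langle A,\, B^{(br)}_u\rangle}(f)$; if $C^{(br)}_{u,\le \tau}\cdot x = d^{(br)}_{u,\le \tau}\models f' = c$ over 0-1 assignments for some $c\in \F_p$, she adds the corresponding $f = \tilde c$ non-branchingly; otherwise she picks two values $c_1 \ne c_2 \in f(\{0,1\}^n)$ with both extensions $A\cdot x = b \wedge B^{(br)}_u\cdot x = c^{(br)}_u\wedge f = \tilde c_i$ 0-1 satisfiable, declares a branching point, and lets Prover decide.

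The strategy delivers $s$ branching points provided Delayer can maintain, while $|B_u^{(br)}| < s$, the invariant that $C_{u,\le \tau/2}\cdot x = d_{u,\le \tau/2}$ is 0-1 satisfiable and that every nonbranching equation in $B_u$ is 0-1-implied by $C^{(br)}_{u,\le \tau}\cdot x = d^{(br)}_{u,\le \tau}$. The key dimension bound is $[\langle A\cdot x = b\rangle]_{\omega \le \tau} = 0$, which holds because $\tau < d \le d_A$; this forces $\dim \langle C^{(br)}_{u,\le \tau}\cdot x = d^{(br)}_{u,\le \tau}\rangle \le |B^{(br)}_u| \le s$. The proof then splits into items \ref{prop1} and \ref{prop2} from the introduction. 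For item \ref{prop1}, construct a 0-1 partial assignment $\rho$ with $|supp(\rho)| \le 0.5\tau$ that satisfies the short equations of $C^{(br)}_{u,\le\tau}\cdot x = d^{(br)}_{u,\le\tau}$ and whose residual system has distance cubic in its dimension; Corollary~\ref{implClaim} then lifts the 0-1 implication $C^{(br)}_{u,\le\tau}\cdot x = d^{(br)}_{u,\le\tau}\models h = a$ to $(h = a)\rst_\rho$ lying in the restricted span, and Theorem~\ref{eccSat} certifies 0-1 satisfiability of $C_{u,\le \tau/2}\cdot x = d_{u,\le \tau/2}$. Item \ref{prop2} is handled by the same restriction-and-apply scheme, now executed on $A\cdot x = b \wedge B^{(br)}_u\cdot x = c^{(br)}_u \wedge f = \tilde c$, to verify that promoting a chosen $f = \tilde c$ into the branching set keeps the enlarged short-weight span 0-1 satisfiable.

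The main obstacle will be the construction of $\rho$ and the accompanying bookkeeping: one must show that an at-most-$s$-dimensional subspace of equations of weight at most $\tau$ admits a 0-1 cover of support at most $0.5\tau$ on which all short equations are simultaneously satisfied, that the residual system has distance at least $C_{I,p}$ times the cube of its dimension, and that the conclusions of Corollary~\ref{implClaim} transfer back through the restriction to the unrestricted span. Once these technical pieces are in place, every nonbranching round locks down Prover's $f$ to a unique value via the small-weight span, so the game cannot reach a 0-1 unsatisfiable endgame without $s = \Omega(d^{0.2})$ branching points being declared.
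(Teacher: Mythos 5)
Your outline follows the paper's own route (it is essentially the introduction's sketch with the correct calibration $s=\Theta(d^{0.2})$, $\tau=\Theta(d^{0.8})$), but what you defer as ``the main obstacle'' is precisely the content of the actual proof, so as it stands there is a genuine gap. First, the existence of the partial assignment $\rho$ with $|supp(\rho)|\leq 0.5\tau$ that kills all equations of weight below $\tau_0=d^{0.6}$ in the span of the instance together with the branching equations is not automatic: the paper proves it (Claim~\ref{assignClaim}) by a greedy process that repeatedly fixes, according to the global 0-1 satisfying assignment of the short-weight span, the variables of any restricted equation that has become short, and bounds the number of iterations by $\dim\langle B^{(br)}_u\rangle\leq s$ via a linear-independence count that crucially uses $\omega(\mathcal{F})\geq d\gg s\cdot\tau_0$. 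Without this argument you have no bound on $|supp(\rho)|$ and the whole restriction-and-apply scheme collapses. Second, the one justification you do give for the key dimension bound is insufficient: from $[\langle A\cdot x=b\rangle]_{\omega\leq\tau}=0$ alone it does not follow that the weight-$\leq\tau$ span of $\langle A\rangle+\langle B^{(br)}_u\rangle$ has dimension at most $|B^{(br)}_u|$ (a single row of weight $2\tau$ plus one added equation agreeing with it on half its support already gives a two-dimensional short-weight span). One needs the stronger hypothesis $\omega(\mathcal{F})> s\cdot\tau$, which is exactly the paper's Claim~\ref{shortDimClaim} and holds here only because the constant in $s\leq 0.5\cdot C_{I,p}^{-1/3}d^{0.2}$ is chosen to make $s\tau<d$.

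The branching step also needs more than ``the same scheme again'': after a branching equation $l'=c_i$ is added, the new weight-$\leq\tau$ span may contain equations that are not in the old short-weight span plus $\langle l'=c_i\rangle$, so the 0-1 assignment you hold is a priori not a satisfying assignment for the system whose satisfiability you must maintain. The paper resolves this by observing that every such new equation has reduced weight at least $0.5\tau$, hence the two spans agree up to weight $0.5\tau-1$, and then re-applies Claim~\ref{assignClaim} with threshold $0.5 d^{0.6}$ followed by Theorem~\ref{eccSat}; this step, together with the surrounding invariant bookkeeping, is where most of the work lies and it is absent from your proposal. Two smaller slips: your branching rule asks that both full systems $A\cdot x=b\wedge B^{(br)}_u\cdot x=c^{(br)}_u\wedge f=\tilde c_i$ be 0-1 satisfiable, which is impossible since $A\cdot x=b$ itself is 0-1 unsatisfiable --- the condition must be 0-1 satisfiability of the weight-$\leq\tau$ span, as in the paper; and the LinTrees$^{(p)}$ endgame condition is unsatisfiability over $\F_p^n$, not over $\{0,1\}^n$, which is ruled out because an $\F_p$-inconsistency would place $0=c$ with $c\neq 0$ (weight $0$) inside the short-weight span you keep 0-1 satisfiable.
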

\begin{proof}

Denote $A\cdot x = b$ equations as $\mathcal{F}=\{f_1 = a_1, \ldots, f_m = a_m\}$.

Let $\tau := d^{4/5}$. Recall from Section~\ref{sec:notation} that $red_P(h = a)$ is
some equation $h' = a'$ of minimal weight in 
$\{\alpha h + f\, |\, \alpha\neq 0, f \in \spanv{P}\}$. We now restate Delayer's strategy
from the beginning of the section more formally.

\bigskip

\noindent\textbf{Delayer's strategy:}

\medskip

Assume the equations added by the current game node are $\mathcal{H}=\{h_1=c_1,\ldots,h_t=c_t\}$
and the linear form chosen by Prover is $l$. Assume moreover that $\mathcal{G}=\{g_1=b_1, \ldots, g_s=b_s\}\subseteq \mathcal{H}$ correspond precisely to 
branching equations, that is to equations added at branching points. Note that 
$\mc H$ can be significantly bigger than $\mc G$ and initially $\mc H$ can be empty.

Denote $(l' = e) := red_{\mathcal{F} + \mathcal{G}}(l = 0)$ and assume $\alpha l' \equiv f + l$, where $f = \alpha e \in \mathcal{F} + \mathcal{G}$, $\alpha\neq 0, e \in \F_q$. 
The strategy is as follows:

\begin{enumerate}
\item\label{nonbranch}\textbf{Non-branching case.} If $[\mathcal{F}+\mathcal{G}]_{\omega \leq \tau} \bmodels l' = c$ for some $c\in \F_q$, then just proceed along the edge 
$l = \alpha c - \alpha e$. The set $\mc H$ is extended with the equation 
$l = \alpha c - \alpha e$.
\item\label{branch}\textbf{Branching case.} Otherwise choose $c_1 \neq c_2\in \F_q$ such that 
$[\mathcal{F}+\mathcal{G}]_{\omega \leq \tau}\cup \{l' = c_i\}$ is 0-1 satisfiable for
$i=1,2$, mark the current node as branching node and proceed in both directions along the edges $l = \alpha c_i - \alpha e$. In branch $i=1,2$ the set $\mc G$ is extended with 
the branching equation $l = \alpha c_i - \alpha e$.
\end{enumerate}

\bigskip

Recall that $[\mathcal{F}+\mathcal{G}]_{\omega \leq \tau}$ is the subspace of 
$\mc F + \mc G$ generated by vectors of weight at most $\tau$. This definition of the strategy ensures that certain invariants hold which in turn 
implies the lower bound.

\begin{claim}\textbf{(Strategy invariants)}\label{clm:invariants} \\ The following invariants hold if 
$s := |\mc G| < 0.5\cdot {(6(q+1)\ln q)^{-1/3}}\cdot d^{1/5}$

\begin{itemize}
\item 
$[\mathcal{F}+\mathcal{G}]_{\omega \leq \tau}\models_{0,1} \mathcal{H}\setminus \mathcal{G}$ where semantic implication is over 0-1 assignments.
\item The system equations $[\mathcal{F}+\mathcal{G}]_{\omega \leq \tau}$ is 
0-1 satisfiable.
\end{itemize}
\end{claim}

\textbf{Invariants imply the lower bound.} Let us first show that if these invariants are
preserved, then $[\mathcal{F}+\mathcal{H}]_{\omega \leq \tau}$ remains 0-1 satisfiable, ensuring that the current position is not an endgame position. The position is not
an endgame position in this case $0 = 1$ is not in $\mathcal{F}+\mathcal{H}$. This, in turn, guarantees that the strategy yields at least $0.5\cdot {(6(q+1)\ln q)^{-1/3}}\cdot d^{0.2}$ branching points. We show this in two steps. 

In the first step, using a satisfying 0-1 assignment $\rho_0$ for $[\mathcal{F}+\mathcal{G}]_{\omega \leq \tau}$, we define a partial assignment $\rho\subset\rho_0$ with 
$|supp(\rho)| \leq 0.5\cdot \tau$
that satisfies all narrow equations in $\mathcal{F}+\mathcal{G}$ in the sense that $\omega((\mathcal{F}+\mathcal{G})\rst_{\rho}) \geq \tau_0$,\footnote{All short equations
are turned by $\rho$ into $0 = 0$. Remaining equations have weight at least $\tau_0$.} where we set $\tau_0 := d^{3/5}$. Note
that $([\mathcal{F}+\mathcal{G}]_{\omega \leq \tau})\rst_{\rho}$ is still 0-1 satisfiable
since $\rho\subset\rho_0$ and $\rho_0$ is a satisfying 0-1 assignment.

In the second step we argue that for such an assignment $\rho$ we have $(\mathcal{H}\setminus \mathcal{G})\rst_{\rho} \subset ([\mathcal{F}+\mathcal{G}]_{\omega \leq \tau})\rst_{\rho}$
meaning that the application of $\rho$ essentially eliminates $\mathcal{H}\setminus \mathcal{G}$ modulo $[\mathcal{F}+\mathcal{G}]_{\omega \leq \tau}$. This follows from
that fact that 
$([\mathcal{F}+\mathcal{G}]_{\omega \leq \tau})\rst_{\rho}$ does not contain narrow equations. Therefore, by Corollary~\ref{implClaim}, whenever 
$([\mathcal{F}+\mathcal{G}]_{\omega \leq \tau})\rst_{\rho}\bmodels h = a$ it must be that $h = a \in ([\mathcal{F}+\mathcal{G}]_{\omega \leq \tau})\rst_{\rho}$.

As we will show, by the second step we have 
$([\mathcal{F}+\mathcal{H}]_{\omega\leq 0.5\cdot \tau})\rst_{\rho} \subseteq ([\mathcal{F}+\mathcal{G}]_{\omega\leq \tau})\rst_{\rho}$,
by the first step $([\mathcal{F}+\mathcal{G}]_{\omega\leq \tau})\rst_{\rho}$ is 0-1 satisfiable, and therefore $[\mathcal{F}+\mathcal{H}]_{\omega\leq \tau}$ is 0-1 satisfiable. 

For the first step we show the following:

\begin{claim}\label{assignClaim}
Let $\mathcal{P}$ and $\mathcal{R}$ be sets of linear equations, $d_{\mathcal{R}} := dim(\mathcal{R})$, fix some $\tau_0 \in \mathbb{N}$ and let $\rho_0$ be a satisfying assignment (not neccessarily 0-1)
for $[\mathcal{P}+\mathcal{R}]_{\omega \leq 2\cdot d_{\mathcal{R}} \cdot \tau_0}$. If 
$d_{\mathcal{R}} \leq 0.5 \cdot \omega(\mathcal{P}) / \tau_0 - 1$ then there exists a partial assignment $\rho \subset \rho_0$ such that
$|supp(\rho)| \leq d_{\mathcal{R}} \cdot \tau_0$ and $\omega((\mathcal{P}+\mathcal{R})\rst_{\rho}) \geq \tau_0$. 
\end{claim}
\begin{proof}
Define partial assignments $\{\rho^{(i)}\}$ inductively as follows:

\begin{enumerate}
\item $\rho^{(0)}=\emptyset$.
\item Let $\rho^{(i)}$ be the current assignment. If there are no equations in $(\mathcal{P}+\mathcal{R})\rst_{\rho^{(i)}}$ of weight less than $\tau_0$, then the induction
stops and $\rho := \rho^{(i)}$. Otherwise if $\hat{g}_i = \hat{b}_i \in \mathcal{P}+\mathcal{R}$ is such that $\omega(\hat{g}_i\rst_{\rho^{(i)}}) < \tau_0$,
then $\rho^{(i+1)}$ extends $\rho^{(i)}$ by setting variables in $vars(\hat{g}_i\rst_{\rho^{(i)}})$ according to $\rho_0$.
\end{enumerate}	

Let $s$ be the number of steps in the inductive procedure above. We now prove that $s\leq d_{\mathcal{R}}$.

Assume $s > d_{\mathcal{R}}$. It is easy to see that $\{\hat{g}_i\}$ are linearly independent. In particular, at the step $d_{\mathcal{R}} + 1$ there are $d_{\mathcal{R}} + 1$
linearly independent equations $\hat{\mathcal{R}}:=\{\hat{g}_1 = \hat{b}_1, \ldots, \hat{g}_{d_{\mathcal{R}} + 1} = \hat{b}_{d_{\mathcal{R}} + 1}\}$. Since 
$|vars(\hat{g}_1) \cup \dots \cup vars(\hat{g}_{d_{\mathcal{R}} + 1})| < (d_{\mathcal{R}} + 1) \cdot \tau_0 \leq 0.5\cdot \omega(\mathcal{P})$ the intersection 
$\spanv {\mc P}\cap \spanv{\hat{\mc {R}}}$ is zero ($0=0$) and therefore $dim(\mathcal{P}+\hat{\mathcal{R}}) = dim(\mathcal{P}) + dim(\hat{\mathcal{R}}) = dim(\mathcal{P}) + d_{\mathcal{R}} + 1$.
But on the other hand $\mathcal{P}+\hat{\mathcal{R}} = \mathcal{P}+\mathcal{R}$ and $dim(\mathcal{P}+\mathcal{R})\leq dim(\mathcal{P}) + d_{\mathcal{R}}$, which is a contradiction.

As a consequence of $s\leq d_{\mathcal{R}}$ we have $|supp(\rho)| \leq s \cdot \tau_0 \leq d_{\mathcal{R}}\cdot \tau_0$. Also since 
$(d_{\mathcal{R}} + 1)\cdot \tau_0 \leq 0.5\cdot \omega(\mathcal{P})$ the vector space $(\mathcal{P}+\mathcal{R})\rst_{\rho}$ is nonzero and therefore
$\omega((\mathcal{P}+\mathcal{R})\rst_{\rho}) \geq \tau_0$. 
\end{proof}

We use Claim~\ref{assignClaim} with $\mathcal{P} := \mathcal{F}$,  $\mathcal{R} := \mathcal{G}$, $\rho_0$ - a satisfying 0-1 assignment for $[\mathcal{F}+\mathcal{G}]_{\omega \leq \tau}$
and $\tau_0 := d^{3/5}$. By Claim~\ref{assignClaim} there exists a partial assignment $\rho\subset \rho_0$ such that $|supp(\rho)|\leq 0.5 \cdot \tau$ and 
$\omega((\mathcal{F}+\mathcal{G})\rst_{\rho})\geq \tau_0$.

We now turn to the second step and prove that $(\mathcal{H})\rst_{\rho}\subseteq ([\mathcal{F}+\mathcal{G}]_{\omega\leq \tau})\rst_{\rho}$. 
Since $\omega((\mathcal{F}+\mathcal{G})\rst_{\rho})\geq \tau_0$ either $([\mathcal{F}+\mathcal{G}]_{\omega\leq \tau})\rst_{\rho}$ is zero or 
$\omega(([\mathcal{F}+\mathcal{G}]_{\omega\leq \tau})\rst_{\rho})\geq \tau_0$.

If $([\mathcal{F}+\mathcal{G}]_{\omega\leq \tau})\rst_{\rho}$ is zero then $(\mathcal{H})\rst_{\rho}$ is also zero.
Otherwise $\omega(([\mathcal{F}+\mathcal{G}]_{\omega\leq \tau})\rst_{\rho})\geq \tau_0$.
In that case we use Corollary~\ref{implClaim} together with the following technical fact:

\begin{claim}\label{shortDimClaim}
Let $\mathcal{P}$ and $\mathcal{R}$ be sets of linear equations, $d_{\mathcal{R}} := dim(\mathcal{R})$ and fix some $\tau_0 \in \mathbb{N}$. If 
$\omega(\mathcal{P}) > d_{\mathcal{R}} \cdot \tau_0$, then $dim([\mathcal{P}+\mathcal{R}]_{\omega\leq \tau_0})\leq d_{\mathcal{R}}$.
\end{claim}

We use Corollary~\ref{implClaim} with $P := ([\mathcal{F}+\mathcal{G}]_{\omega\leq \tau})\rst_{\rho}$. Note that $dim(P)\leq dim([\mathcal{F}+\mathcal{G}]_{\omega\leq \tau}) \leq dim(\mc G)$
where the second inequality is by Claim~\ref{shortDimClaim} since $\omega(\mathcal{F}) > dim(\mathcal{G}) \cdot \tau$. Also 
$\omega(P) \geq \tau_0 \geq ({\CI})\cdot dim(\mc G)^3 \geq (\CI) \cdot dim(P)^3$.
Therefore by Corollary~\ref{implClaim} we have that
$(\mathcal{H})\rst_{\rho}\subseteq P\subseteq (\mathcal{F}+\mathcal{G})\rst_{\rho}$. 

To see that
$([\mathcal{F}+\mathcal{H}]_{\omega\leq 0.5\cdot \tau})\rst_{\rho} \subseteq ([\mathcal{F}+\mathcal{G}]_{\omega\leq \tau})\rst_{\rho}$ consider 
$f + h \in [\mathcal{F}+\mathcal{H}]_{\omega\leq 0.5\cdot \tau}$, where $f \in \mathcal{F}+\mathcal{G}$ and $h \in \mathcal{H} \setminus \mathcal{G}$. Since
$(\mathcal{H}\setminus \mathcal{G})\rst_{\rho} \subset ([\mathcal{F}+\mathcal{G}]_{\omega \leq \tau})\rst_{\rho}$ we have $(f + h)\rst_{\rho} = (f + f')\rst_{\rho}$ for some
$f' \in \mathcal{F}+\mathcal{G}$. And since $\omega((f + f')\rst_{\rho}) \leq 0.5\cdot \tau$ and $|supp(\rho)| \leq 0.5\cdot \tau$ we have $\omega(f + f') \leq \tau$.


We thus have that  
$([\mathcal{F}+\mathcal{H}]_{\omega\leq 0.5\cdot \tau})\rst_{\rho}=([\mathcal{F}+\mathcal{G}]_{\omega\leq\tau})\rst_{\rho}$ is 0-1 satisfiable and therefore 
$[\mathcal{F}+\mathcal{H}]_{\omega\leq 0.5\cdot \tau}$
is 0-1 satisfiable. The second step is completed.


\bigskip

\textbf{Preservation of invariants.} We now prove that invariants still hold after each step provided $s < 0.5\cdot {(6(q+1)\ln q)^{-1/3}}\cdot d^{1/5}$. In case \ref{nonbranch} $\mathcal{G}$ remains unchanged and the invariants hold for obvious reasons. 

In case \ref{branch} we just need to show that $[\mathcal{F}+\mathcal{G}+\langle l = c_i - \alpha e \rangle]_{\omega \leq \tau}$ is 0-1 satisfiable for $i = 1,2$.
If $\omega(l' = c_i) > \tau$, then $[\mathcal{F}+\mathcal{G}+\langle l = \alpha c_i - \alpha e \rangle]_{\omega \leq \tau} = [\mathcal{F}+\mathcal{G}]_{\omega \leq \tau}$ is 0-1 satisfiable. Otherwise let 
$[\mathcal{F}+\mathcal{G}+\langle l = \alpha c_i - \alpha e \rangle]_{\omega \leq \tau} = \langle [\mathcal{F}+\mathcal{G}]_{\omega \leq \tau}, l' = c_i, h_1 = b_1, \ldots, h_t = b_t \rangle$,
where $t\in \N$ and $\omega(red_{\mathcal{F} + \mathcal{G} + \langle l = \alpha c_i - \alpha e \rangle}(h_j = b_j))\geq 0.5 \cdot \tau$ for $j \in [t]$. 

We use the Claim~\ref{assignClaim} with $\mathcal{P} := \mathcal{F}$,  $\mathcal{R} := \mathcal{G}+\langle l' = c_i \rangle$, $\rho_0$ - a 0-1 satisfying assignment for $[\mathcal{F}+\mathcal{G}]_{\omega \leq \tau} + \langle l' = c_i \rangle$
and $\tau_0 := 0.5 \cdot d^{3/5}$. Note that since $\omega(red_{\mathcal{F} + \mathcal{G} + \langle l = \alpha c_i - \alpha e \rangle}(h_j = b_j))\geq 0.5 \cdot \tau$ we have 
$[\mathcal{F} + \mathcal{G} + \langle l = \alpha c_i - \alpha e\rangle]_{\omega \leq 0.5\cdot \tau - 1} = [[\mathcal{F} + \mathcal{G}]_{\omega \leq \tau} + \langle l' = c_i \rangle]_{\omega\leq 0.5\cdot \tau - 1}$ therefore
$\rho_0$ is a 0-1 satisfying assignment for $[\mathcal{F} + \mathcal{G} + \langle l = c_i - \alpha e \rangle]_{\omega\leq 0.5\cdot \tau - 1}$.
By the Claim~\ref{assignClaim} there exists a partial assignment $\rho\subset \rho_0$ such that 
$|supp(\rho)|\leq dim(\mathcal{R})\cd \tau_0\leq s \cdot \tau_0\leq  0.25 \cdot \tau$ and 
$\omega((\mathcal{F}+\mathcal{G} + \langle l = \alpha c_i - \alpha e \rangle)\rst_{\rho})\geq \tau_0$. Therefore 
\begin{multline*}
\omega(([\mathcal{F}+\mathcal{G} + \langle l = \alpha c_i - \alpha e \rangle]_{\omega \leq \tau})\rst_{\rho})\geq \tau_0 \geq ((q+1)\ln q)\cdot s^3 \geq \\ ((q+1)\ln q)\cdot dim(([\mathcal{F}+\mathcal{G} + \langle l = \alpha c_i - \alpha e \rangle]_{\omega \leq \tau})\rst_{\rho})^3 
\end{multline*}
where the last inequality is by the Claim~\ref{shortDimClaim}. By 
Theorem~\ref{thm:isemptyECC}.(\ref{itm:empECC}) 
$([\mathcal{F}+\mathcal{G} + \langle l = \alpha c_i - \alpha e \rangle]_{\omega \leq \tau})\rst_{\rho}$ is 0-1 satisfiable and therefore $[\mathcal{F}+\mathcal{G} + \langle l = \alpha c_i - \alpha e \rangle]_{\omega \leq \tau}$
is 0-1 satisfiable.
\end{proof}

We obtain the following corollaries using Corollary~\ref{cor:rand_code_dist} and
Proposition~\ref{prop:hermit_code_dist}:

\begin{corollary}\label{linTreesLB}
Let $(A, b)\in\ECC^{n,k,d}_{\F_q}$. Every $\LinTrees_{\F_q}$ refutation of $A\cdot x = b$ is of size $2^{\Omega(((q+1)\ln q)^{-1/3}d^{1/5})}$. In particular, every $\LinTrees_{\F_q}$ refutation
of random instances and Hermitian instances are of size $2^{\Omega(((q+1)\ln q)^{-1/3}n^{1/5})}$.
\end{corollary}

\begin{corollary}\label{tlResLB}
Let $(A, b)\in\ECC^{n,k,d}_{\F_q}$. Every tree-like $\reslin{\F_q}$ refutation of $A\cdot x = b$ is of size $2^{\Omega(((q+1)\ln q)^{-1/3}d^{1/5})}$. In particular, every tree-like $\reslin{\F_q}$ refutation
of random instances and Hermitian instances are of size $2^{\Omega(((q+1)\ln q)^{-1/3}n^{1/5})}$.
\end{corollary}
\begin{proof}
Follows from Theorem~\ref{delayerEquiv} and Theorem~\ref{secondMain}.
\end{proof}

\subsection{Binary regular dag-like $\reslin{\F_q}$ lower bounds}\label{bindagLB}

We sketch the proof informally; full details follow.
Fix a $\BinRegDags_{\F_q}$ refutation $T$ of $(A,b)$.

\medskip\noindent\textbf{Setup.}
At each node $u$ of $T$, write $\mathcal{F}_u$ for the labelling system.
Every 0-1 assignment $\rho$ traces a path $\pi_\rho$ through $T$ from root to a contradiction leaf.
Along $\pi_\rho$, define $\rho_t$ as a minimal-support partial assignment such that
$\mathcal{F}_{v_t} \subseteq \langle\mathcal{F}\rst_{\rho_t}\rangle$,
where $v_t$ is the $t$-th node on $\pi_\rho$.
Let $v_\rho$ be the \emph{checkpoint}: the first node on $\pi_\rho$ at which $|\mathrm{supp}(\rho_t)| = s$.
Write $\hat\rho := \rho_{t_0}$ for the corresponding minimal partial assignment.

At the checkpoint, there exists $(C_\rho\cdot x = d_{\rho}) \subseteq \langle\mathcal{F}\rangle$ (a set of equations
over $A\cdot x = b$'s span) such that $\mathcal{F}_{v_\rho} = (C_\rho\cdot x = d_{\rho})\rst_{\hat\rho}$.
Define the \textbf{label} of $\rho$ as $\ell(\rho) := (C_\rho,\, I_\rho)$ where
$I_\rho := \mathrm{supp}(\hat\rho)$. See Figure~\ref{fig:dag-paths}.

\begin{figure}[htbp]
\centering
\begin{subfigure}[t]{0.47\linewidth}
\centering
\resizebox{\linewidth}{!}{%
\begin{tikzpicture}[
  nd/.style={circle, draw, minimum size=6mm, inner sep=1pt, font=\small},
  chk/.style={rectangle, rounded corners=2pt, draw, thick, fill=yellow!30,
              minimum size=7mm, font=\small},
  lf/.style={draw=none, font=\small},
  p1/.style={blue!70!black, thick},
  p2/.style={red!70!black, thick},
  >=Stealth
]
\node[nd]  (r)   at ( 0,  0  ) {};   
\node[nd]  (a)   at (-2, -1.5) {};
\node[nd]  (c)   at ( 2, -1.5) {};
\node[chk]  (d)   at (-2, -3.0) {$v_{\rho^{(1)}}$};
\node[nd]  (f)   at ( 2, -3.0) {};
\node[chk] (h)   at ( 2, -4.5) {$v_{\rho^{(2)}}$};
\node[lf]  (l1)  at (-2, -6.0) {$\bot$};
\node[lf]  (l3)  at ( 2, -6.0) {$\bot$};
\draw[gray!50] (r)--(c);
\draw[gray!50] (a)--(f);
\draw[gray!50] (c)--(f);
\draw[gray!50] (f)--(h);
\draw[gray!50] (h)--(l3);
\draw[p1] (r) -- (a);
\draw[p1] (a) -- (d);
\draw[p1,dashed] (d) -- (l1) node[midway, right, font=\scriptsize, p1]{after checkpoint};
\draw[p2] (r) -- (c);
\draw[p2] (c) -- (f);
\draw[p2] (f) -- (h);
\draw[p2,dashed] (h) -- (l3);
\node[font=\scriptsize, blue!70!black, right=4pt] at (d.east)
  {$\ell(\rho^{(1)})=(C_1,I_1)$};
\node[font=\scriptsize, blue!70!black, left=4pt] at (d.west)
  {$|\mathrm{supp}(\rho^{(1)}_t)|{=}s$};
\node[font=\scriptsize, blue!70!black, right=4pt] at (a.east)
  {$|\mathrm{supp}(\rho^{(1)}_t)|{<}s$};
\node[font=\scriptsize, red!70!black, right=4pt] at (h.east)
  {$\ell(\rho^{(2)})=(C_2,I_2)$, $|\mathrm{supp}(\rho^{(2)}_t)|{=}s$};
\node[font=\scriptsize, red!70!black, right=4pt] at (f.east)
  {$|\mathrm{supp}(\rho^{(2)}_t)|{<}s$};
\draw[p1] (-6.5,-4.2) -- (-5.5,-4.2) node[right,font=\scriptsize]{path of $\rho^{(1)}$};
\draw[p2] (-6.5,-4.8) -- (-5.5,-4.8) node[right,font=\scriptsize]{path of $\rho^{(2)}$};
\node[chk, font=\scriptsize] at (-6.0,-5.5) {$v$};
\node[font=\scriptsize, right=-1pt] at (-5.5,-5.5) {checkpoint ($|\hat\rho|{=}s$)};
\end{tikzpicture}}
\caption{Each assignment $\rho$ has a checkpoint $v_\rho$ (yellow) where
$|\mathrm{supp}(\hat\rho)|=s$; labelled $\ell(\rho)=(C_\rho,I_\rho)$.}
\label{fig:dag-paths}
\end{subfigure}
\hfill
\begin{subfigure}[t]{0.47\linewidth}
\centering
\resizebox{\linewidth}{!}{%
\begin{tikzpicture}[font=\small, >=Stealth,
  blob/.style={draw, ellipse, minimum width=2.2cm, minimum height=3.5cm,
               align=center, inner sep=4pt},
  lbl/.style={draw, rounded corners=2pt, fill=gray!10, inner sep=3pt,
              minimum width=2cm, align=center, font=\scriptsize}
]
\node[blob, fill=blue!5, label=above:{$X$ (nodes $v_\rho$)}] (X) at (-3.5,0) {};
\foreach \y in {1.0, 0.3, -0.4, -1.1}
  \node[circle, draw, fill=blue!20, minimum size=5mm, inner sep=0] at (-3.5,\y) {};
\node[blob, fill=orange!8, label=above:{$Y$ (labels $\ell(\rho)$)}] (Y) at (3.5,0) {};
\node[lbl] (y1) at (3.5, 0.9) {$(C_1,I_1)$};
\node[lbl] (y2) at (3.5, 0.0) {$(C_2,I_2)$};
\node[lbl] (y3) at (3.5,-0.9) {$\vdots$\;\;$\vdots$};
\draw[->] (-2.8, 0.9) -- (2.5, 0.9);
\draw[->] (-2.8, 0.2) -- (2.5, 0.2);
\draw[->] (-2.8,-0.5) -- (2.5, 0.0);
\draw[->] (-2.8,-1.1) -- (2.5,-0.9);
\node at (0, 1.5) {$F$\quad(surjection)};
\node[draw, fill=green!8, rounded corners=3pt, align=center, font=\scriptsize]
  (c1) at (3.5, -2.5)
  {\textbf{Case 1:} $|Y|\geq 2^{\Omega(r)}$\\ $\Rightarrow |X|\geq|Y|\geq 2^{\Omega(r)}$};
\node[draw, fill=yellow!15, rounded corners=3pt, align=center, font=\scriptsize]
  (c2) at (-3.5,-2.8)
  {\textbf{Case 2:} some $(C,I)$ has\\ large fibre $|F^{-1}(C,I)|$\\
   $(s,r)$-robust: $\mathrm{rank}(C_I)\!\geq\! r$\\
   many distinct $\mathcal{F}_{v_\rho}$ as $\rho$ varies\\
   $\Rightarrow |X|\geq 2^{\Omega(r)}$};
\draw[->, dashed] (c2.north) -- (y2.south west);
\end{tikzpicture}}
\caption{Surjection $F:X\twoheadrightarrow Y$ gives $|X|\geq|Y|$.
Case~1: $|Y|\geq 2^{\Omega(r)}$ directly.
Case~2: large fibre + $(s,r)$-robustness $\Rightarrow$ $2^{\Omega(r)}$ distinct nodes.}
\label{fig:dag-counting}
\end{subfigure}
\caption{Lower bound via a surjection.}
\label{fig:dag-combined}
\end{figure}

\noindent\textbf{Lower bound via a surjection.}
Set $X:=\{v_\rho\}_\rho$ (all checkpoint nodes) and $Y:=\{\ell(\rho)\}_\rho$ (all labels).
The map $F: v_\rho\mapsto\ell(\rho)$ is a well-defined surjection $X\twoheadrightarrow Y$,
so $|X|\geq|Y|$. For some $K = 2^{\Omega(r)}$ we show $|X|\geq K$ by splitting into two cases.

\noindent\textbf{Case~1}: $|Y|\geq K$.
Then $|X|\geq|Y|\geq K$ directly from the surjection.

\noindent\textbf{Case~2}: $|Y| < K$.
By pigeonhole, some label $(C,I)\in Y$ is shared by a large set
$Z:=\{\rho:\ell(\rho)=(C,I)\}$.
All $\rho\in Z$ satisfy $I_\rho=I$, so the partial assignments
$\hat\rho$ vary only on $I$. In particular, $\hat Z:=\{\hat \rho:\rho \in Z\}$ must be large.
Since $(A,b)$ is $(s,r)$-robust, $\mathrm{rank}(C_I)\geq r$, which makes the
map $\hat\rho\mapsto (C\!\cdot\!x)\rst_{\hat\rho}$ injective on $Z$:
distinct $\hat\rho$ produce distinct restricted systems $\mathcal{F}_{v_\rho}$,
hence distinct nodes $v_\rho\in X$.
Hence, the number of distinct $v_\rho\in X$ must be large and in fact one can show that
it is at least $K$.

In both cases $|T|\geq|X|\geq K$, finishing the sketch. See Figure~\ref{fig:dag-counting}.

We now give the full proof.

\begin{theorem}\label{firstMain}
Assume an instance $(A, b) \in \ECC^{n, k, d}_{\F_q}$ is $(s, r)$-robust for some $s < d / 2$.
Recall that $d = d_A := \omega(\mc C_A)$ is the minimal distance of the code
$\mc C_A = \{x\cdot A \mid x\in\F_q^k\}$ (see Section~\ref{sec:notation}).
Then every $\BinRegDags_{\F_q}$ refutation of $(A, b)$ is of size $2^{\Omega(r)}$.
\end{theorem}
\begin{proof}
Let $T$ be a $\BinRegDags_{\F_q}$ refutation of $(A, b)$ instance written as $\mathcal{F}=\{f_1=a_1, \ldots, f_m = a_m\}$.
Every 0-1 assignment $\rho$ defines a path $p_{\rho}=(v_0^{(\rho)}\equiv r, v_1^{(\rho)}, \ldots, v_{l(\rho)}^{(\rho)})$ from the root $r \in T$ to a terminal node: if $v_i^{(\rho)} \in T$ is 
marked with a variable $x_{v_i^{(\rho)}}$ then the edge $(v_i^{(\rho)}, v_{i+1}^{(\rho)})$ is marked with $\rho(x_{v_i^{(\rho)}})$. 

For an initial segment $p_t'=(v_0^{(\rho)}\equiv r, v_1^{(\rho)}, \ldots, v_t^{(\rho)})\subset p_{\rho}$ define a partial assignment $\rho_t\subset \rho$ to be arbitrary partial
assignment of minimal size such that $supp(\rho_t)\subset \{x_{v_0^{(\rho)}}, \ldots, x_{v_t^{(\rho)}}\}$ and 
$\mathcal{F}_{v_t^{(\rho)}} \subseteq \langle \mathcal{F}\rst_{\rho_t}\rangle$. 



Let $s_0$ be the minimal such that 
$|supp(\rho_{s_0})| = s$ and denote $\hat{\rho} := \rho_{s_0}$, $v_{\rho} := v_{s_0}^{(\rho)}$. 
Note that $s_0$ always exists since if $|supp(\rho_{t})| < s$ then 
$\omega(\langle \mathcal{F}_{v_t^{(\rho)}}\rst_{\rho_t}\rangle_{[1]})\geq d - s \geq 1$
(since $\langle \mathcal{F}\rangle_{[1]}$ is ECC) and therefore $v_t^{(\rho)}$ is not a terminal node.

Our goal now is to show that the set $X := \{v_{\rho}\}_{\rho}$ is large.

To an assignment $\rho$ we associate a set of linear equations $\mathcal{F}_{\rho}$ such that $\mathcal{F}_{\rho} \subseteq \langle\mathcal{F}\rangle$ and
$\mathcal{F}_{v_{\rho}} = \mathcal{F}_{\rho}\rst_{\hat{\rho}}$. The existence of such $\mathcal{F}_{\rho}$ follows from definitions of $\hat{\rho}$ and $v_{\rho}$.
Note that the maps $\mathcal{F}_{\rho} \rightarrow \mathcal{F}_{\rho}\rst_{\hat{\rho}}$ and $(\mathcal{F}_{\rho})_{[1]} \rightarrow (\mathcal{F}_{\rho})_{[1]}\rst_{\hat{\rho}}$
given by application of $\hat{\rho}$ are bijective since $|supp(\hat{\rho})| = s < d / 2$.

Consider the set $Y := \{((\mathcal{F}_{\rho})_{[1]}, supp(\hat{\rho}))\}_{\rho}$. 
The set $Y$ is not larger than $X$: inequality $|X|\geq |Y|$ follows from the existence of a surjective function $X \rightarrow Y$.

\begin{claim*} 
The relation $R := \{(v_{\rho}, ((\mathcal{F}_{\rho})_{[1]}, supp(\hat{\rho})))\}_{\rho} \subset X \times Y$ defines a surjective function $F\, :\, X \rightarrow Y$.
\end{claim*}
\begin{proof}
Surjectivity of $F$ is obvious. It remains to see that $R$ defines a function. 

Assume
$((\mathcal{F}_{\rho_1})_{[1]}, supp(\hat{\rho}_1))\neq
 ((\mathcal{F}_{\rho_2})_{[1]}, supp(\hat{\rho}_2))$.
If $supp(\hat{\rho}_1) \neq supp(\hat{\rho}_2)$ then 
because of minimality of $\hat{\rho}_1$ and $\hat{\rho}_2$ it follows that 
$(\mathcal{F}_{\rho_1})_{[1]}\rst_{\hat{\rho}_1} \neq (\mathcal{F}_{\rho_2})_{[1]}\rst_{\hat{\rho}_2}$ and therefore $v_{\rho_1}\neq v_{\rho_2}$. If $supp(\hat{\rho}_1) = supp(\hat{\rho}_2)$ then 
$(\mathcal{F}_{\rho_1})_{[1]} \neq (\mathcal{F}_{\rho_2})_{[1]}$ and therefore 
$(\mathcal{F}_{\rho_1})_{[1]}\rst_{\hat{\rho}_1} \neq (\mathcal{F}_{\rho_2})_{[1]}\rst_{\hat{\rho}_2}$ 
by injectivity of the map $(\mathcal{F}_{\rho})_{[1]} \rightarrow (\mathcal{F}_{\rho})_{[1]}\rst_{\hat{\rho}}$ and thus $v_{\rho_1}\neq v_{\rho_2}$.

\end{proof}

For $\epsilon \in (0, 1)$ consider two cases: $|Y| \geq 2^{\epsilon\cdot s}$ and $|Y| < 2^{\epsilon\cdot s}$.

\begin{itemize}
\item \textbf{Case $|Y| \geq 2^{\epsilon\cdot s}$.} Since $|X| \geq |Y|$ in this case we have immediately a lower bound $|X| \geq 2^{\epsilon\cdot s}$.

\medskip
 \item \textbf{Case $|Y| < 2^{\epsilon\cdot s}$.} In this case there exists $(F_0, I_0)\in Y$ such that the set of assignments
$A_0 := \{ \rho\, | \, F_{\rho} = F_0, supp(\hat{\rho}) = I_0\}$ has cardinality at least $2^{n - \epsilon\cdot s}$. Note that since linear forms in $\mathcal{F}$ are linearly independent
there exists unique $\mathcal{F}_0$ such that $\mathcal{F}_0 \subseteq \langle \mathcal{F} \rangle$ and $\langle \mathcal{F}_0 \rangle_{[1]} = \langle F_0 \rangle$.

For a partial assignment $\rho_0$ with $supp(\rho_0) = I_0$ there exist at most $2^{n - s}$ extensions to a full assignment. Therefore the set of partial assignments 
$B_0:= \{ \hat{\rho}\, | \, \rho \in A_0\}$ has cardinality at least $|A_0| / 2^{n - s} \geq 2^{(1 - \epsilon)\cdot s}$. Let $J := vars(F_0)$, $l := |F_0|$ and let $M$ be a $l \times |J|$ matrix and 
$b\in \mathbb{F}_q^l$ be a vector such that $M\cdot x = b$ is the system $\mathcal{F}_0$. For an assignment $\rho\in A_0$ the system 
$\mathcal{F}_{v_{\rho}}$, which coincides with $\mathcal{F}_0\rst_{\hat{\rho}}$, can be written as 
$M_{[J\setminus I_0]}\cdot x_{[J\setminus I_0]} = b - M_{[I_0]}\cdot a_{\hat{\rho}}$ where $a_{\hat{\rho}}\in \mathbb{F}_q^s$ is the vector of values assigned by $\hat{\rho}$. 
Therefore for two assignments $\rho_1, \rho_2\in A_0$ the systems $\mathcal{F}_{v_{\rho_1}}$ and $\mathcal{F}_{v_{\rho_2}}$ are different and thus $v_{\rho_1} \neq v_{\rho_2}$
iff $M_{[I_0]}\cdot a_{\hat{\rho}_1} \neq M_{[I_0]}\cdot a_{\hat{\rho}_2}$.

Consider the set $C_0 := \{M_{[I_0]}\cdot a_{\hat{\rho}}\, |\, \rho \in A_0\}$. From the argument above it follows that $|X| \geq |C_0|$. We thus need to lower bound the image
of the set $B_0 = \{a_{\hat{\rho}}\, | \, \rho \in A_0\}$ under the operator given by $l\times s$ matrix $M_{[I_0]}$. We will use two properties of $B_0$ and $M_{[I_0]}$:
$|B_0| \geq 2^{(1-\epsilon)\cdot s}$ and $rank(M_{[I_0]})\geq r$ 
(by $(s,r)$-robustness).

\begin{claim*}\label{imgLB}
Let $M$ be $k\times n$ matrix over $\mathbb{F}_q$, $r := rank(M)$ and $X \subseteq \mathbb{F}_q^n$ be such that $|X| \geq 2^{(1 - \epsilon)\cdot n}$ for some
$\epsilon \in (0, 1)$. Then $|M(X)| \geq 2^{r - \epsilon\cdot n}$.
\end{claim*}
\begin{proof}
Note that performing Gaussian elimination on $M$ does not change $|M(X)|$. Arrange matrix $M$ in block form as follows (possibly applying Gaussian elimination):
\[
  M = \kbordermatrix{
    & r &  & n-r  \\
    r & I_r & \vrule & *  \\\cline{2-4}
    k-r & 0 & \vrule & * 
  }
\]

Every vector $v\in \mathbb{F}_q^n$ we split accordingly $v = a_v b_v$, where $a_v$ contains first $r$ coordinates and $b_v$ last $n - r$ coordinates. Since there are 
$2^{n - r}$ suffixes $b_v$ it follows that there exists $b_0$ such that the set $X_0 := \{v \, | \, v\in X, b_v = b_0\}$ is of size $|X_0|\geq |X| / 2^{n - r} \geq 2^{r - \epsilon\cdot n}$.
The injectivity of $M$ on $X_0$ implies the bound: $|M(X)| \geq |M(X_0)| = |X_0|\geq 2^{r - \epsilon\cdot n}$.
\end{proof}
By the Claim~\ref{imgLB} we have $|X| \geq 2^{r - \epsilon\cdot s}$.
\end{itemize}
The case analysis above shows $|X| \geq \min(2^{\epsilon\cdot s}, 2^{r - \epsilon\cdot s})$. Choosing $\epsilon := r / 2s$ we obtain $|X| \geq 2^{r / 2}$.
\end{proof}

\begin{corollary}\label{cor:dag-LB-rand}
If $A$ is uniformly random $k \times n$ matrix over $\F_q$ where 
$n / \log q < k < n / 9$ 
then there exists $b\in \F_q^k$ such that $A\cdot x = b$ is 0-1 unsatisfiable and
with high probability $\BinRegDags_{\F_q}$ all refutatioins of $A\cdot x = b$ are
of size $2^{\Omega\left((n/(q + 1)\ln q)^{1/3}\right)}$.
\end{corollary}

\begin{corollary}\label{cor:dag-LB-herm}
Let $A$ be a generator matrix of the Hermitian code over $\F_q$
for $q = p^2$ for a prime $p$, with parameters $n = p^3$, $k = \lfloor p^3/\log_2 p - p^2 / 2 \rfloor$,
$d \geq (1 - 1/\log p) n$. Then there exists $b\in \F_q^k$ such that $A\cdot x = b$
is 0-1 unsatisfiable and such that all $\BinRegDags_{\F_q}$ refutations
are of size $2^{\Omega\left((n/(q + 1)\ln q)^{1/3}\right)}$.
\end{corollary}

\section{Conclusion}\label{sec:conclusion}

The results in this paper open several directions for future research.

\begin{itemize}
\item Extend Theorem~\ref{firstMain} to work with arbitrary linear forms, i.e.\ to
  general dag-like $\reslin{\F_q}$ refutations.
\item Extend the lower bound of Theorem~\ref{firstMain} to a strengthening of
  $\BinRegDags_{\F_q}$ obtained by relaxing the regularity condition.
\item Determine whether ECC distance or $(s,r)$-robustness provide a measure of
  hardness for Nullstellensatz or Polynomial Calculus, or identify another natural
  parameter that does.
\item Narrow the gap $[\Omega(k\log q),\;((q+1)\ln q)k^3]$ for the threshold
  $\Delta(k,q)$ above which every system $A\cdot x=b$ with $d_A\geq\Delta(k,q)$
  is guaranteed to be 0-1 satisfiable.
\item Determine whether $(s,\omega(\log n))$-robust instances exist in
  $\ECC^{n,k,d}_{\F_q}$ for $s\leq k/2$.
\item Find a natural 0-1 unsatisfiable system $A\cdot x=b$ encoding a combinatorial
  principle (e.g.\ the Pigeonhole Principle) where $A$ generates a good ECC.
\item Find an explicit $b\in\F_q^k$ with $b\notin A_{RS}(\{0,1\}^{q-1})$ for the
  Reed-Solomon code, or prove no such $b$ exists.
\end{itemize}

\bibliographystyle{plain}
\bibliography{PrfCmplx-Bakoma}

\end{document}